\documentclass[pdflatex,sn-mathphys-num]{sn-jnl}

\usepackage{graphicx}%
\usepackage{multirow}%
\usepackage{amsmath,amssymb,amsfonts}%
\usepackage{amsthm}%
\usepackage{mathrsfs}%
\usepackage[title]{appendix}%
\usepackage{xcolor}%
\usepackage{textcomp}%
\usepackage{manyfoot}%
\usepackage{booktabs}%
\usepackage{placeins}
\usepackage{algorithm}%
\usepackage{algorithmicx}%
\usepackage{algpseudocode}%
\usepackage{listings}%
\usepackage{subcaption}

\theoremstyle{thmstyleone}%
\newtheorem{theorem}{Theorem}
\newtheorem{proposition}[theorem]{Proposition}%

\theoremstyle{thmstyletwo}%
\newtheorem{remark}{Remark}%
\newtheorem{lemma}{Lemma}
\theoremstyle{thmstylethree}%

\begin{document}

\title[Traveling Waves and Bumps in $p$-adic NNs]{Pseudo-Traveling Waves and Bumps in  Quantum and Classical Hierarchical Cellular Neural Networks}


\author*[1]{\fnm{W. A.} \sur{Z\'{u}\~{n}iga-Galindo}}\email{wilson.zunigagalindo@utrgv.edu}

\author[2]{\fnm{B. A.} \sur{Zambrano-Luna}}\email{bzambran@ualberta.ca}
\equalcont{These authors contributed equally to this work.}

\author[1]{\fnm{Chayapuntika} \sur{Indoung}}\email{chayapuntika.indoung01@utrgv.edu}
\equalcont{These authors contributed equally to this work.}

\affil*[1]{\orgdiv{School of Mathematical \& Statistical Sciences}, \orgname{University of Texas Rio Grande Valley}, \orgaddress{\street{One West University Blvd}, \city{Brownsville}, \postcode{78520}, \state{Texas}, \country{United States}}}

\affil[2]{\orgdiv{Department Of Mathematical and Statistical Sciences}, \orgname{University Of Alberta}, \orgaddress{\street{CAB 632}, \city{Edmonton},  \state{Alberta}, \country{Canada T6G 2G1}}}



\abstract{We study the existence of pseudo-traveling waves and bump solutions for two
classes of hierarchical cellular neural networks (CNNs) defined over the
ring of $p$-adic integers $\mathbb{Z}_{p}$. The first type is a $p$-adic CNN
described by a reaction-diffusion equation, while the second type is its
quantum analog obtained via Wick rotation. The $p$-adic CNNs are
hierarchical versions of the classical Chua-Yang CNNs; these networks have a
tree-like hierarchical architecture with infinitely many cells and hidden
layers. The states are governed by integro-differential equations on $%
\mathbb{Z}_{p}$. The $p$-adic traveling waves behave fundamentally
differently from their Archimedean counterparts. A traveling wave restricted
to a $p$-adic sphere yields a countably infinite collection of independent
patterns. We introduce the notion of pseudo-traveling waves as finite
truncations of this structure and prove their existence for both the
classical and quantum networks. We further establish the existence of
time-independent solutions (bumps) for both models. Our theoretical results
are complemented by numerical simulations that approximate
pseudo-traveling-wave solutions for quantum CNNs.
}

\keywords{Cellular Neural Networks, Quantum Cellular Neural Networks, Traveling Waves, $p$-Adic Numbers, $p$-Adic Schr\"{o}dinger Equations, Hierarchical Architectures.}



\maketitle

\section{Introduction}

In \cite{Zambrano-Zuniga-1}-\cite{Zambrano-Zuniga-2}, see also \cite{Zuniga et
al}, the first two authors introduced the $p$-adic cellular neural networks
(CNNs), which are hierarchical generalizations of the CNNs introduced \ by
Chua and Yang in 1980s, see e.g., \cite{Chua-Tamas}-\cite{Slavova}. The
$p$-adic numbers are naturally organized in a tree-like structure, which can
be used to construct models of hierarchical neural networks (NNs). The
$p$-adic CNNs have an infinite number of cells organized hierarchically in a
tree-like structure, with an infinite number of hidden layers. Intuitively,
these networks occur as limits of large hierarchical discrete CNNs.

Here, we \ consider hierarchical NNs whose topology can be codified using the
ring of $p$-adic integers $\mathbb{Z}_{p}$. Geometrically $\mathbb{Z}_{p}$ is
an infinite rooted tree, with infinite layers; the vertices at the $\left(
l+1\right)  $-th layer corresponds of $p$-adic numbers of the form
$a_{0}+a_{1}p+\ldots+a_{l}p^{l}$, where the $a_{i}$ belong to $\left\{
0,1,\ldots,p-1\right\}  $. In this paper, we consider $p$-adic CNNs whose
states are described by
\begin{equation}
\frac{\partial}{\partial t}u\left(  x,t\right)  =J\left(  \left\vert
x\right\vert _{p}\right)  \ast u\left(  x,t\right)  -u\left(  x,t\right)  +%
{\displaystyle\int\limits_{\mathbb{Z}_{p}}}
W\left(  \left\vert x-y\right\vert _{p}\right)  \phi\left(  u\left(
y,t\right)  \right)  dy+Z(\left\vert x\right\vert _{p}), \label{CNN-1}%
\end{equation}
where $x\in\mathbb{Z}_{p}$, $t\geq0$, $J:\left[  0,1\right]  \rightarrow
\mathbb{R}_{\geq0}$, and $\int_{\mathbb{Z}_{p}}J\left(  \left\vert
x\right\vert _{p}\right)  dx=1$, and $W$, $Z:\mathbb{R}\rightarrow\mathbb{R}$,
$\left\vert \cdot\right\vert _{p}$ denotes the $p$-adic norm, $dx$ is the Haar
measure of $\mathbb{Z}_{p}$, and and$\ u\left(  x,t\right)  $ is a real-valued
function. This function is interpreted as a measure of the neural activity (or state) at position $x$ (or neuron $x$) at time $t$. The above integro-differential equation is a reaction-diffusion equation. The diffusion\ part is
controlled by%
\begin{equation}
\frac{\partial}{\partial t}u\left(  x,t\right)  =J\left(  \left\vert
x\right\vert _{p}\right)  \ast u\left(  x,t\right)  -u\left(  x,t\right)  .
\label{Heat-1}%
\end{equation}
This is a heat equation on $\mathbb{Z}_{p}$, which means \ that the
fundamental solution is the transition probability of a Markov process on
$\mathbb{Z}_{p}$; see, e.g., \cite{Zuniga-PHYA}-\cite{Kochubei}.

By performing a Wick rotation in (\ref{Heat-1}), and taking $\Psi\left(
x,t\right)  =u\left(  x,it\right)  $, we obtain a free Schr\"{o}dinger
equation%
\[
i\frac{\partial}{\partial t}\Psi\left(  x,t\right)  =\Psi\left(  x,t\right)
-J\left(  \left\vert x\right\vert _{p}\right)  \ast\Psi\left(  x,t\right)  .
\]
Now, the quantum analog of (\ref{CNN-1}) is
\begin{equation}
i\frac{\partial}{\partial t}\Psi\left(  x,t\right)  =\Psi\left(  x,t\right)
-J\left(  \left\vert x\right\vert _{p}\right)  \ast\Psi\left(  x,t\right)  +%
{\displaystyle\int\limits_{\mathbb{Z}_{p}}}
W\left(  \left\vert x-y\right\vert _{p}\right)  \phi\left(  \Psi\left(
y,t\right)  \right)  dy+Z(x). \label{QNN-1}%
\end{equation}

In \cite{Zuniga-QNNs}, the authors introduced the quantum neural networks
(QNNs) of type (\ref{QNN-1}); they gave a detailed study of the discretization
of the new $p$-adic Schr\"{o}dinger equations, which allows the construction
of new QNNs on simple graphs. They also provided detailed numerical
simulations, offering a clear insight into the functioning of the new QNNs. At
a mathematical level, the authors showed the existence of global solutions for
the new $p$-adic Schr\"{o}dinger equations. This paper continues this work, by
studying the existence of special solutions (pseudo-traveling waves and bumps)
for equations (\ref{CNN-1}) and (\ref{QNN-1}).

This paper aims to initiate the study of traveling waves in NNs of types
(\ref{CNN-1}) and (\ref{QNN-1}). By traveling waves, we mean solutions of the
form $\psi\left(  \left\vert x\right\vert _{p}+vt\right)  $, $\psi
:\mathbb{R}\rightarrow\mathbb{C}$, where $x\in\mathbb{Z}_{p}$, $t\geq0$. If
$v=0$, we say that $\psi\left(  \left\vert x\right\vert _{p}\right)  $ is a
bump. In the standard case, the existence of traveling waves is reduced to the
solution of certain ODEs. This fact is not true in the $p$-adic case, which
makes it very challenging. We argue that the dificulty comes from the fact
that\ the topology of $\mathbb{Z}_{p}$ is radically different from the one of
$\mathbb{R}$; $\mathbb{Z}_{p}$ is homeomorphic to Cantor-like subset of
$\mathbb{R}$; see, e.g., \cite{V-V-Z}, \cite{Alberio et al}. Consider an
Archimedean (standard) traveling wave $\chi\left(  x+vt\right)  $, where
$x,t,v\in\mathbb{R}$, where $v$ is fixed. For $t=t_{0}$ fixed, the restriction
of the traveling wave\ to the set $\left\vert x\right\vert =a$ gives two
values $\chi\left(  \pm a+vt_{0}\right)  $, and consequently, \ the mentioned
restriction does not provide any relevant information about the traveling
wave. The $p$-adic case is completely different. The set $S_{-j}=\left\{
x\in\mathbb{Z}_{p};\left\vert x\right\vert _{p}=p^{-j}\right\}  $ is the
sphere centered at the origin with radius $p^{-j}$, with $j$ a non-negative
integer. This sphere is an infinite set with a positive Haar measure. Denote
by $1_{S_{-j}}\left(  x\right)  $ the characteristic function of $S_{-j}$.
With this notation,%
\begin{equation}
\psi\left(  \left\vert x\right\vert _{p}+vt\right)  =%
{\displaystyle\sum\limits_{j=0}^{\infty}}
\psi\left(  p^{-j}+vt\right)  1_{S_{-j}}\left(  x\right)  ,
\label{Traveling_Wave}%
\end{equation}
which means that the pattern associated with a traveling wave consists of a
countable collection of independent patterns controlled by the same
parameters. We warn the reader that (\ref{Traveling_Wave}) \ is not a series,
because the supports of the functions $1_{S_{-j}}\left(  x\right)  $ are
disjoint, for this reason, $\psi\left(  \left\vert x\right\vert _{p}%
+vt\right)  $ is an independent collection of patterns.

Motivated by the above discussion, we study the existence\ of pseudo-traveling
waves, which are functions of the form%
\begin{equation}%
{\displaystyle\sum\limits_{j=0}^{M}}
\psi_{j}\left(  p^{-j}+vt\right)  1_{S_{-j}}\left(  x\right)  +\Omega
(p^{M+1}\left\vert x\right\vert _{p})\psi_{M+1}\left(  vt\right)  ,
\label{Traveling_Wave_2}%
\end{equation}
where $\Omega(p^{M+1}\left\vert x\right\vert _{p})$ is the characteristic
function of the ball
\[
B_{-M-1}=\left\{  x\in\mathbb{Z}_{p};\left\vert x\right\vert _{p}\leq
p^{-M-1}\right\}  ,
\]
where the functions $\psi_{j}:\mathbb{R}\rightarrow\mathbb{C}$, \ $j=0,1,\ldots,M,M+1$ are differentiable, and $v$ is a positive real number. We show that
(\ref{CNN-1}) and (\ref{QNN-1}) admit pseudo-traveling waves; see Theorem
\ref{Theorem-3A}, and the remarks that follow it. On the other hand, our
numerical simulations are based on pseudo-traveling waves.

Notice that $\lim_{M\rightarrow\infty}\Omega(p^{M+1}\left\vert x\right\vert
_{p})=0$, so taking the limit $M\rightarrow\infty$ in (\ref{Traveling_Wave_2}%
), we obtain the traveling wave $\sum_{j=0}^{\infty}\psi_{j}\left(
p^{-j}+vt\right)  1_{S_{-j}}\left(  x\right)  $, however we do \ not know if
this function (or distribution) is a solution of (\ref{CNN-1}) or
(\ref{QNN-1}). So the existence of traveling waves for NNs of types
(\ref{CNN-1}) and (\ref{QNN-1}) is an open problem. We developed a numerical technique to approximate solutions of the form \ref{Traveling_Wave_2}. This technique allows us to provide numerical approximations for pseudo-traveling waves of QCNNs of type \ref{QNN-1}.

Finally, we show the existence of time-independent \ solutions (bumps) for NNs of types (\ref{CNN-1}) and (\ref{QNN-1}), see Theorems
\ref{Theorem-2A} and \ref{Theorem-2B}.

\section{Basic concepts of $p$-adic analysis}

In this section, we fix the notation and collect some basic results on
$p$-adic analysis that we use in this paper. For a detailed exposition on
$p$-adic analysis, the reader may consult \cite{Zuniga-Textbook},
\cite{Kochubei}, \cite{V-V-Z}-\cite{Taibleson}.

\subsection{$p$-Adic numbers}

From now on, we use $p$ to denote a fixed prime number. Any non-zero $p$-adic
number $x$ has a unique expansion of the form%
\begin{equation}
x=x_{-k}p^{-k}+x_{-k+1}p^{-k+1}+\ldots+x_{0}+x_{1}p+\ldots,\text{ }
\label{p-adic-number}%
\end{equation}
with $x_{-k}\neq0$, where $k$ is an integer, and the $x_{j}$s\ are numbers
from the set $\left\{  0,1,\ldots,p-1\right\}  $. The set of all possible
sequences of the form (\ref{p-adic-number}) constitutes the field of $p$-adic
numbers $\mathbb{Q}_{p}$. There are natural field operations, sum and
multiplication, on series of form (\ref{p-adic-number}). There is also a norm
in $\mathbb{Q}_{p}$ defined as $\left\vert x\right\vert _{p}=p^{-ord(x)}$,
where $ord_{p}(x)=ord(x)=-k$, for a nonzero $p$-adic number $x$. By definition
$ord(0)=\infty$. The field of $p$-adic numbers with the distance induced by
$\left\vert \cdot\right\vert _{p}$ is a complete ultrametric space. The
ultrametric property refers to the fact that $\left\vert x-y\right\vert
_{p}\leq\max\left\{  \left\vert x-z\right\vert _{p},\left\vert z-y\right\vert
_{p}\right\}  $ for any $x$, $y$, $z$ in $\mathbb{Q}_{p}$. The $p$-adic
integers, which are sequences of form (\ref{p-adic-number}) with $-k\geq0$,
constitute the unit ball $\mathbb{Z}_{p}$. The unit ball is an infinite rooted
tree with fractal structure. As a topological space $\mathbb{Q}_{p}$\ is
homeomorphic to a Cantor-like subset of the real line, see, e.g.,
\cite{V-V-Z}-\cite{Alberio et al}.

\subsection{Test functions}

A function $\varphi:\mathbb{Q}_{p}\rightarrow\mathbb{C}$ is called locally
constant, if for any $a\in\mathbb{Q}_{p}$, there is an integer $l=l(a)$, such
that
\[
\varphi\left(  a+x\right)  =\varphi\left(  a\right)  \text{ for any }%
|x|_{p}\leq p^{l}.
\]
The set of functions for which $l=l\left(  \varphi\right)  $ depends only on
$\varphi$ form a $\mathbb{C}$-vector space denoted as $\mathcal{U}%
_{loc}\left(  \mathbb{Q}_{p}\right)  $. We call $l\left(  \varphi\right)  $
the exponent of local constancy. If $\varphi\in\mathcal{U}_{loc}\left(
\mathbb{Q}_{p}\right)  $ has compact support, we say that $\varphi$ is a test
function. We denote by $\mathcal{D}(\mathbb{Q}_{p})$ the complex vector space
of test functions (or Bruhat-Schwartz functions). There is a natural
integration theory so that $\int_{\mathbb{Q}_{p}}\varphi\left(  x\right)  dx$
gives a well-defined complex number. The measure $dx$ is the Haar measure of
$\mathbb{Q}_{p}$, \cite{Halmos}.

\subsection{Lebesgue spaces}

Let $U\subset\mathbb{Q}_{p}$ be an open subset. We denote by $\mathcal{D}(U)$ the $\mathbb{C}$-vector space of test functions with supports
contained in $U$; then, $\mathcal{D}(U)$ is dense in
\[
L^{\rho}(U)=\left\{  \varphi:U\rightarrow\mathbb{C};\left\Vert \varphi
\right\Vert _{\rho}=\left\{
{\displaystyle\int\limits_{U}}
\left\vert \varphi\left(  x\right)  \right\vert ^{\rho}dx\right\}  ^{\frac
{1}{\rho}}<\infty\right\}  ,
\]
for $1\leq\rho<\infty$, see, e.g., \cite[Section 4.3]{Alberio et al}.

\section{Existence of Bumps in $p$-adic QNNs}

Only in this section, we set 
\[
J\left( \left\vert x\right\vert _{p}\right) :\mathbb{Z}_{p}\rightarrow 
\mathbb{R}_{\geq 0},\left\Vert J\right\Vert _{1}=\int_{\mathbb{Z}%
_{p}}J\left( \left\vert x\right\vert _{p}\right) dx\in \left( 0,1\right) ,
\]%
$W\left( \left\vert x\right\vert _{p}\right) $, $Z(\left\vert x\right\vert
_{p})\in L^{1}\left( \mathbb{Z}_{p}\right) $.  We assume that $\phi :\mathbb{%
R}\rightarrow \mathbb{R}$ is Lipschitz, i.e., there exists a positive
constant $L_{\phi }$ such that $\left\vert \phi \left( a\right) -\phi \left(
b\right) \right\vert \leq L_{\phi }\left\vert a-b\right\vert $, for any $a$, 
$b\in \mathbb{R}$. We extend $\phi :\mathbb{C}\rightarrow \mathbb{C}$, by
taking $\phi \left( a+ib\right) =\phi \left( a\right) +i\phi \left( b\right) 
$, for $a$, $b\in \mathbb{R}$. Relevant examples of activation functions are 
$\phi \left( s\right) =\tanh (s)$, and $\varphi (s)=\frac{1}{2}\left(
|s+1|-|s-1|\right) .$

We consider a (convolutional) quantum network of the form%
\begin{equation}
i\frac{\partial}{\partial t}\Psi\left(  x,t\right)  =-J\left(  \left\vert
x\right\vert _{p}\right)  \ast\Psi\left(  x,t\right)  +\Psi\left(  x,t\right)
+%
{\displaystyle\int\limits_{\mathbb{Z}_{p}}}
W\left(  \left\vert x-y\right\vert _{p}\right)  \phi\left(  \Psi\left(
y,t\right)  \right)  dy+Z(\left\vert x\right\vert _{p}).
\label{EQ_SCHRODINGER_4}%
\end{equation}

In this section, we show the existence of time-independent solutions (bumps)
for (\ref{EQ_SCHRODINGER_4}).

\begin{theorem}
\label{Theorem-2A}Assume that $J,Z,W\in L^{1}\left(  \mathbb{Z}_{p}\right)  $, with $\left\Vert J\right\Vert _{1}\in \left( 0,1\right) $, 
and $\left\Vert J\right\Vert _{1}+L_{\phi}\left\Vert W\right\Vert _{1}%
<1$. Then, the integral equation%
\begin{equation}
\psi\left(  x\right)  =J\left(  \left\vert x\right\vert _{p}\right)  \ast
\psi\left(  x\right)  -%
{\displaystyle\int\limits_{\mathbb{Z}_{p}}}
W\left(  \left\vert x-y\right\vert \right)  \phi\left(  \psi\left(  y\right)
\right)  dy-Z(x) \label{EQ_10}%
\end{equation}
has a unique solution $\psi\in L^{1}\left(  \mathbb{Z}_{p}\right)  $.
\end{theorem}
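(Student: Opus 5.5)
We will obtain the solution of (\ref{EQ_10}) as the unique fixed point of a contraction on the Banach space $L^{1}(\mathbb{Z}_{p})$, using the Banach fixed point theorem. Define, for $\psi\in L^{1}(\mathbb{Z}_{p})$ (complex valued),
\[
T\psi(x)=J(|x|_{p})\ast\psi(x)-\int_{\mathbb{Z}_{p}}W(|x-y|_{p})\phi(\psi(y))\,dy-Z(x).
\]
Here convolution is taken in the additive group $\mathbb{Z}_{p}$, which is compact and abelian with normalized Haar measure. The plan has three steps: show that $T$ maps $L^{1}(\mathbb{Z}_{p})$ into itself, show that $T$ is a contraction, and then conclude existence and uniqueness.

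\textbf{Well-definedness.} By Young's inequality on the compact group $\mathbb{Z}_{p}$,
\[
\|J\ast\psi\|_{1}\leq\|J\|_{1}\|\psi\|_{1}.
\]
For the nonlinear term we first check that $\phi(\psi)\in L^{1}$. The extension $\phi(a+ib)=\phi(a)+i\phi(b)$ satisfies
\[
|\phi(a+ib)-\phi(c+id)|\leq|\phi(a)-\phi(c)|+|\phi(b)-\phi(d)|\leq L_{\phi}(|a-c|+|b-d|)\leq\sqrt{2}\,L_{\phi}|(a+ib)-(c+id)|.
\]
Hence $|\phi(z)|\leq|\phi(0)|(1+i\text{-part})+\sqrt{2}L_{\phi}|z|$, and since $\mathbb{Z}_{p}$ has finite measure, $\phi(\psi)\in L^{1}$. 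Young's inequality again gives $W\ast\phi(\psi)\in L^{1}$. Together with $Z\in L^{1}$, this shows $T:L^{1}\to L^{1}$.

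\textbf{Contraction.} For $\psi_{1},\psi_{2}\in L^{1}$,
\[
\|T\psi_{1}-T\psi_{2}\|_{1}\leq\|J\|_{1}\|\psi_{1}-\psi_{2}\|_{1}+\|W\|_{1}\|\phi(\psi_{1})-\phi(\psi_{2})\|_{1}.
\]

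\textbf{Main obstacle: the Lipschitz constant of the complexified $\phi$.} Done naively, the bound above yields the constant $\sqrt{2}L_{\phi}$, which the hypothesis $\|J\|_{1}+L_{\phi}\|W\|_{1}<1$ does not cover. To obtain exactly $L_{\phi}$, I will work with the equivalent norm
\[
\|\psi\|_{*}=\|\operatorname{Re}\psi\|_{1}+\|\operatorname{Im}\psi\|_{1}.
\]
Since $J$, $W$ and $Z$ are real valued (being functions of $|x|_{p}$), the operator $T$ splits into real and imaginary parts. Convolution with the real kernel $J$ commutes with taking real and imaginary parts, and so does convolution with $W$. Moreover $\operatorname{Re}\phi(\psi)=\phi(\operatorname{Re}\psi)$ and $\operatorname{Im}\phi(\psi)=\phi(\operatorname{Im}\psi)$. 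Applying Young's inequality and the real Lipschitz bound componentwise therefore gives
\[
\|T\psi_{1}-T\psi_{2}\|_{*}\leq\bigl(\|J\|_{1}+L_{\phi}\|W\|_{1}\bigr)\|\psi_{1}-\psi_{2}\|_{*}.
\]
By hypothesis the constant in parentheses is less than $1$.

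\textbf{Conclusion.} The norm $\|\cdot\|_{*}$ is equivalent to $\|\cdot\|_{1}$, so $(L^{1},\|\cdot\|_{*})$ is complete. The Banach fixed point theorem then yields a unique $\psi\in L^{1}(\mathbb{Z}_{p})$ with $T\psi=\psi$, which is exactly equation (\ref{EQ_10}).

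A remark on the hypotheses: the assumption $\|J\|_{1}\in(0,1)$ is only used through the contraction inequality. If $Z$ is real valued, the same argument carried out in the real subspace shows that the solution is real.
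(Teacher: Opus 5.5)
Your proof is correct and follows the paper's argument: $T$ maps $L^{1}(\mathbb{Z}_{p})$ into itself and is a contraction with constant $\|J\|_{1}+L_{\phi}\|W\|_{1}$ by Young's inequality and the Lipschitz bound on $\phi$, so the Banach fixed point theorem gives existence and uniqueness. Your ``main obstacle'' is not actually an obstacle: for $z=a+ib$, $w=c+id$ one has $|\phi(z)-\phi(w)|^{2}=|\phi(a)-\phi(c)|^{2}+|\phi(b)-\phi(d)|^{2}\leq L_{\phi}^{2}|z-w|^{2}$, so the complexified $\phi$ is already $L_{\phi}$-Lipschitz in modulus, which is what the paper uses directly. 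The detour through $\|\cdot\|_{*}$ is therefore valid but unnecessary, and it loses generality because it needs $J$ and $W$ to be real-valued; they are real here by the paper's standing assumptions, not because they are radial.
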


\begin{proof}
Take $f,g\in L^{1}\left(  \mathbb{Z}_{p}\right)  $, and set%
\[
\left(  Tf\right)  \left(  x\right)  =:J\left(  \left\vert x\right\vert
_{p}\right)  \ast f\left(  x\right)  -%
{\displaystyle\int\limits_{\mathbb{Z}_{p}}}
W\left(  \left\vert x-y\right\vert \right)  \phi\left(  f\left(  y\right)
\right)  dy-Z(x).
\]
Then
\[
\left\Vert \left(  Tf-Tg\right)  \left(  x\right)  \right\Vert _{1}%
\leq\left\Vert J\ast\left(  f-g\right)  \right\Vert _{1}+\left\Vert \text{ }%
{\displaystyle\int\limits_{\mathbb{Z}_{p}}}
W\left(  \cdot-y\right)  \left\{  \phi\left(  f\left(  y\right)  \right)
-\phi\left(  g\left(  y\right)  \right)  \right\}  dy\right\Vert _{1}.
\]
Now, $\left\Vert J\ast\left(  f-g\right)  \right\Vert _{1}\leq\left\Vert
J\right\Vert _{1}\left\Vert \left(  f-g\right)  \right\Vert _{1}$, cf.
\cite[Chapter II, Theorem 1.7]{Taibleson}, and using the fact that \ $\phi
$\ is a Lipschitz function, and Fubini's theorem,
\begin{gather*}
\left\Vert \text{ }%
{\displaystyle\int\limits_{\mathbb{Z}_{p}}}
W\left(  \left\vert \cdot-y\right\vert _{p}\right)  \left\{  \phi\left(
f\left(  y\right)  \right)  -\phi\left(  g\left(  y\right)  \right)  \right\}
dy\right\Vert _{1}=\\%
{\displaystyle\int\limits_{\mathbb{Z}_{p}}}
\text{ }\left\vert \text{ }%
{\displaystyle\int\limits_{\mathbb{Z}_{p}}}
W\left(  \left\vert x-y\right\vert _{p}\right)  \left\{  \phi\left(  f\left(
y\right)  \right)  -\phi\left(  g\left(  y\right)  \right)  \right\}
dy\right\vert dx\leq\\
L_{\phi}%
{\displaystyle\int\limits_{\mathbb{Z}_{p}}}
\text{ }%
{\displaystyle\int\limits_{\mathbb{Z}_{p}}}
\text{ }\left\vert W\left(  \left\vert x-y\right\vert _{p}\right)  \right\vert
\left\vert f(y)-g(y)\right\vert dydx=\\
L_{\phi}%
{\displaystyle\int\limits_{\mathbb{Z}_{p}}}
\text{ }\left\vert f(y)-g(y)\right\vert \left\{
{\displaystyle\int\limits_{\mathbb{Z}_{p}}}
\text{ }\left\vert W\left(  \left\vert x-y\right\vert _{p}\right)  \right\vert
dx\right\}  dy=\\
L_{\phi}%
{\displaystyle\int\limits_{\mathbb{Z}_{p}}}
\text{ }\left\vert f(y)-g(y)\right\vert \left\{
{\displaystyle\int\limits_{\mathbb{Z}_{p}}}
\text{ }\left\vert W\left(  \left\vert z\right\vert _{p}\right)  \right\vert
dz\right\}  dy=L_{\phi}\left\Vert W\right\Vert _{1}\left\Vert f-g\right\Vert
_{1}.
\end{gather*}
Consequently,%
\begin{equation}
\left\Vert \left(  Tf-Tg\right)  \left(  x\right)  \right\Vert _{1}\leq\left(
\left\Vert J\right\Vert _{1}+L_{\phi}\left\Vert W\right\Vert _{1}\right)
\left\Vert \left(  f-g\right)  \right\Vert _{1}. \label{Inequality_A}%
\end{equation}

Now, taking $g=0$ in (\ref{Inequality_A}),%
\begin{align*}
\left\Vert Tf\right\Vert _{1}  &  \leq\left\Vert \left(  Tf-T0\right)
\right\Vert _{1}+\left\Vert \left(  T0\right)  \right\Vert _{1}\\
&  \leq\left(  \left\Vert J\right\Vert _{1}+L_{\phi}\left\Vert W\right\Vert
_{1}\right)  \left\Vert f\right\Vert _{1}+\left\Vert \phi\left(  0\right)
{\displaystyle\int\limits_{\mathbb{Z}_{p}}}
W\left(  \left\vert \cdot-y\right\vert \right)  dy-Z\right\Vert _{1}\\
&  \leq\left(  \left\Vert J\right\Vert _{1}+L_{\phi}\left\Vert W\right\Vert
_{1}\right)  \left\Vert f\right\Vert _{1}+\left\Vert Z\right\Vert
_{1}+\left\vert \phi\left(  0\right)  \right\vert \left\vert \text{ }%
{\displaystyle\int\limits_{\mathbb{Z}_{p}}}
W\left(  \left\vert z\right\vert \right)  dz\right\vert ,
\end{align*}
which implies that $T:L^{1}\left(  \mathbb{Z}_{p}\right)  \rightarrow
L^{1}\left(  \mathbb{Z}_{p}\right)  $. Now, the result follows by the Banach
fixed point theorem.
\end{proof}

The Theorem \ref{Theorem-2A} is also valid for $p$-adic QNNs of type
\[
\psi\left(  x\right)  =J\left(  \left\vert x\right\vert _{p}\right)  \ast
\psi\left(  x\right)  -\phi\left(
{\displaystyle\int\limits_{\mathbb{Z}_{p}}}
W\left(  x,y\right)  \psi\left(  y\right)  dy+Z(x)\right)  .
\]

\section{Existence of Bumps in $p$-adic CNNs}

In this section, we show the existence of time-independent solutions (bumps) for $p$-adic neural networks of type
\begin{equation}
\frac{\partial}{\partial t}U(x,t)=-\alpha U(x,t)+%
{\displaystyle\int\limits_{\mathbb{Z}_{p}}}
W\left(  \left\vert x-y\right\vert _{p}\right)  \phi\left(  U(y,t)\right)  dy,
\label{EQ_1}%
\end{equation}
where $x\in\mathbb{Q}_{p}$, $t\geq0$, $\alpha>0$, $W:\mathbb{R}_{\geq
0}\rightarrow\mathbb{R}$, $W\left(  \left\vert x\right\vert _{p}\right)  \in
L^{1}\left(  \mathbb{Z}_{p}\right)  $, and $\phi:\mathbb{R}\rightarrow
\mathbb{R}$ is a Lipschitz function as before. In this section, all the
functions are real-valued.

\begin{theorem}
\label{Theorem-2B}Assume that $\alpha>0$, $W\left(  \left\vert x\right\vert
_{p}\right)  \in L^{1}\left(  \mathbb{Z}_{p}\right)  $, and that
$\frac{L_{\phi}\left\Vert W\right\Vert _{1}}{\alpha}<1$. Then, the integral
equation%
\begin{equation}
U(x)=\frac{1}{\alpha}%
{\displaystyle\int\limits_{\mathbb{Z}_{p}}}
W\left(  \left\vert x-y\right\vert _{p}\right)  \phi\left(  U\left(  y\right)
\right)  dy \label{EQ_10B}%
\end{equation}
has a unique solution $U\in L^{1}\left(  \mathbb{Z}_{p}\right)  $.
\end{theorem}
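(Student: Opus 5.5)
We argue exactly as in the proof of Theorem \ref{Theorem-2A}, via the Banach fixed point theorem on $L^{1}(\mathbb{Z}_{p})$, now with no diffusion term and with the factor $\frac{1}{\alpha}$ in front. For $f\in L^{1}(\mathbb{Z}_{p})$ (real-valued) we set
\[
(Tf)(x)=\frac{1}{\alpha}\int\limits_{\mathbb{Z}_{p}}W\left(\left\vert x-y\right\vert_{p}\right)\phi\left(f(y)\right)dy .
\]
A solution of (\ref{EQ_10B}) is then precisely a fixed point of $T$.

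The first step is the contraction estimate. Take $f,g\in L^{1}(\mathbb{Z}_{p})$. The Lipschitz bound on $\phi$ and Tonelli/Fubini give
\[
\left\Vert Tf-Tg\right\Vert_{1}\leq\frac{L_{\phi}}{\alpha}\int\limits_{\mathbb{Z}_{p}}\left\vert f(y)-g(y)\right\vert\left\{\int\limits_{\mathbb{Z}_{p}}\left\vert W\left(\left\vert x-y\right\vert_{p}\right)\right\vert dx\right\}dy .
\]
For $y\in\mathbb{Z}_{p}$, the change of variables $z=x-y$ is a measure-preserving bijection of $\mathbb{Z}_{p}$, because $\mathbb{Z}_{p}$ is an additive group and $dx$ is its Haar measure. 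Hence the inner integral equals $\left\Vert W\right\Vert_{1}$, and
\[
\left\Vert Tf-Tg\right\Vert_{1}\leq\frac{L_{\phi}\left\Vert W\right\Vert_{1}}{\alpha}\left\Vert f-g\right\Vert_{1}.
\]
By hypothesis the constant $\frac{L_{\phi}\left\Vert W\right\Vert_{1}}{\alpha}$ is strictly less than $1$.

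The second step is to check that $T$ maps $L^{1}(\mathbb{Z}_{p})$ into itself. Here $T0=\frac{\phi(0)}{\alpha}\int_{\mathbb{Z}_{p}}W(\left\vert x-y\right\vert_{p})dy$, which is the constant $\frac{\phi(0)}{\alpha}\int_{\mathbb{Z}_{p}}W(\left\vert z\right\vert_{p})dz$. It is finite since $W\in L^{1}$, and it is integrable since $\mathbb{Z}_{p}$ has measure $1$. Then $\left\Vert Tf\right\Vert_{1}\leq\left\Vert Tf-T0\right\Vert_{1}+\left\Vert T0\right\Vert_{1}<\infty$. Measurability of $Tf$ follows from Fubini, as part of the same estimate.

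Since $L^{1}(\mathbb{Z}_{p})$ is a complete metric space and $T$ is a strict contraction on it, the Banach fixed point theorem gives a unique $U\in L^{1}(\mathbb{Z}_{p})$ with $TU=U$, that is, a unique solution of (\ref{EQ_10B}).

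The only mildly delicate point is the Fubini/translation-invariance step, which is what makes the inner integral independent of $y$. It is handled exactly as in Theorem \ref{Theorem-2A}, so I expect no real obstacle.
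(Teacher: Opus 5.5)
Your proof is correct and follows the paper's argument: the same operator $T$, the same Lipschitz--Fubini contraction estimate with constant $L_{\phi}\left\Vert W\right\Vert_{1}/\alpha$, the same check that $T$ maps $L^{1}(\mathbb{Z}_{p})$ into itself via $T0$, and then the Banach fixed point theorem. Your handling of $T0$ is actually more careful than the paper's, which writes $\left\Vert Tf\right\Vert_{1}\leq\frac{L_{\phi}\left\Vert W\right\Vert_{1}}{\alpha}\left\Vert f\right\Vert_{1}$ and so drops the constant $\frac{\phi(0)}{\alpha}\int_{\mathbb{Z}_{p}}W(\left\vert z\right\vert_{p})dz$; this constant is harmless for the conclusion but is nonzero unless $\phi(0)=0$ or $\int W=0$.
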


\begin{proof}
The proof is completely similar to the one given for Theorem
\ref{Theorem-2A}. Take $f,g\in L^{1}\left(  \mathbb{Z}_{p}\right)  $, and set%
\[
\left(  Tf\right)  \left(  x\right)  =:\frac{1}{\alpha}%
{\displaystyle\int\limits_{\mathbb{Q}_{p}}}
W\left(  \left\vert x-y\right\vert _{p}\right)  \phi\left(  f\left(  y\right)
\right)  dy.
\]
Using the fact that \ $\phi$\ is a Lipschitz function, and Fubini's theorem,
\begin{equation}
\left\Vert \left(  Tf-Tg\right)  \left(  x\right)  \right\Vert _{1}=\frac
{1}{\alpha}\left\Vert \text{ }%
{\displaystyle\int\limits_{\mathbb{Z}_{p}}}
W\left(  \cdot-y\right)  \left\{  \phi\left(  f\left(  y\right)  \right)
-\phi\left(  g\left(  y\right)  \right)  \right\}  dy\right\Vert _{1}\leq
\frac{L_{\phi}\left\Vert W\right\Vert _{1}}{\alpha}\left\Vert f-g\right\Vert
_{1}. \label{Inequality}%
\end{equation}
Now, taking $g=0$ in (\ref{Inequality}),%
\[
\left\Vert Tf\right\Vert _{1}\leq\left\Vert \left(  Tf-T0\right)  \right\Vert
_{1}+\left\Vert \left(  T0\right)  \right\Vert _{1}\leq\frac{L_{\phi
}\left\Vert W\right\Vert _{1}}{\alpha}\left\Vert f\right\Vert _{1},
\]
which implies that $T:L^{1}\left(  \mathbb{Z}_{p}\right)  \rightarrow
L^{1}\left(  \mathbb{Z}_{p}\right)  $. Finally, the result follows from the
Banach fixed-point theorem.
\end{proof}

The Theorem \ref{Theorem-2B} is valid for $p$-adic CNNs of type%
\[
U(x)=\frac{1}{\alpha}\phi\left(  {\int\limits_{\mathbb{Z}_{p}}}w(|x-y|_{p}%
)U(y)\,dy+Z(x)\right)
\]
if $Z\in L^{1}(\mathbb{Z}_{p})$.

\section{Convolution operators in the $p$-adic unit ball}

We fix a continuous function $F:\left[  0,1\right]  \rightarrow\mathbb{C}$,
and define the radial complex-valued function $F(\left\vert x\right\vert
_{p})$, for $x\in\mathbb{Z}_{p}$. Notice that $F(\left\vert x\right\vert
_{p})$ is a continuous function on the unit ball, and consequently
$F(\left\vert x\right\vert _{p})\in L^{1}\left(  \mathbb{Z}_{p}\right)  $. We
now define the operator
\[
\left(  \boldsymbol{F}g\right)  \left(  x\right)  =F(\left\vert x\right\vert
_{p})\ast g(x)\text{, for }g(x)\in L^{\rho}\left(  \mathbb{Z}_{p}\right)
\text{, }\rho\in\left[  1,\infty\right]  ,
\]
since $\left\Vert F(\left\vert x\right\vert _{p})\ast g(x)\right\Vert _{\rho
}\leq\left\Vert F(\left\vert x\right\vert _{p})\right\Vert _{1}\left\Vert
g(x)\right\Vert _{\rho}$, and thus the mapping
\[
\boldsymbol{F}:L^{\rho}\left(  \mathbb{Z}_{p}\right)  \rightarrow L^{\rho
}\left(  \mathbb{Z}_{p}\right)
\]
is a bounded linear operator.

We now study the functions $\boldsymbol{F}\left(  1_{S_{-j}}(x)\right)  $, and
$\boldsymbol{F}\left(  \Omega\left(  p^{M+1}\left\vert x\right\vert _{p}\right)  \right)
$, where $j$, $M$ are natural numbers. Notice that%
\[
F(\left\vert x\right\vert _{p})\ast1_{S_{-j}}(x)={\displaystyle\int
\limits_{\mathbb{Z}_{p}}}1_{S_{-j}}(y)F(\left\vert x-y\right\vert _{p})dy=%
{\displaystyle\int\limits_{p^{j}\mathbb{Z}_{p}^{\times}}}
F(\left\vert x-y\right\vert _{p})dy,
\]
where%
\[
\mathbb{Z}_{p}^{\times}=\left\{  x\in\mathbb{Z}_{p};\text{ }\left\vert
x\right\vert _{p}=1\right\}  .
\]

\begin{lemma}
\label{Lemma_A1}Set $I_{j}(x):=%
{\displaystyle\int\nolimits_{p^{j}\mathbb{Z}_{p}^{\times}}}
F(\left\vert x-y\right\vert _{p})dy$, for $x\in\mathbb{Z}_{p}$, $j\in
\mathbb{N}$. Then, the following formulas hold.

\noindent(i) If $\left\vert x\right\vert _{p}>p^{-j}$, then
\[
I_{j}(x)=\left(  1-p^{-1}\right)  p^{-j}F(\left\vert x\right\vert _{p})\text{
if }j\geq1.
\]
The \ case $j=0$ is impossible since $\left\vert x\right\vert _{p}\leq1$.

\noindent(ii) If $\left\vert x\right\vert _{p}<p^{-j}$, then%
\[
I_{j}(x)=%
{\displaystyle\int\limits_{p^{j}\mathbb{Z}_{p}^{\times}}}
F(\left\vert y\right\vert _{p})dy=p^{-j}\left(  1-p^{-1}\right)  F(p^{-j}).
\]

\noindent(iii) If $\left\vert x\right\vert _{p}=p^{-j}$, then%
\[
I_{j}(x)=\left(  1-2p^{-1}\right)  F\left(  p^{-j}\right)  +\left(
1-p^{-1}\right)
{\displaystyle\sum\limits_{k=1}^{\infty}}
p^{-k}F\left(  p^{-j-k}\right)  .
\]

\end{lemma}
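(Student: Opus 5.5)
The whole computation rests on the ultrametric property of $\left\vert \cdot\right\vert _{p}$: if $\left\vert x\right\vert _{p}\neq\left\vert y\right\vert _{p}$, then $\left\vert x-y\right\vert _{p}=\max\{\left\vert x\right\vert _{p},\left\vert y\right\vert _{p}\}$. We also need the Haar measure of the sphere, $\mathrm{vol}(p^{k}\mathbb{Z}_{p}^{\times})=p^{-k}(1-p^{-1})$. The plan is to split according to how $\left\vert x\right\vert _{p}$ compares with $\left\vert y\right\vert _{p}=p^{-j}$, which is constant on the domain of integration.

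Cases (i) and (ii) are immediate. In case (i), $\left\vert x\right\vert _{p}>p^{-j}=\left\vert y\right\vert _{p}$ for all $y\in p^{j}\mathbb{Z}_{p}^{\times}$. So $F(\left\vert x-y\right\vert _{p})=F(\left\vert x\right\vert _{p})$ is constant, and $I_{j}(x)=F(\left\vert x\right\vert _{p})\,p^{-j}(1-p^{-1})$. For $j=0$ this case would need $\left\vert x\right\vert _{p}>1$, which cannot happen in $\mathbb{Z}_{p}$. In case (ii), $\left\vert x-y\right\vert _{p}=\left\vert y\right\vert _{p}=p^{-j}$ on the domain, so $I_{j}(x)=\int_{p^{j}\mathbb{Z}_{p}^{\times}}F(\left\vert y\right\vert _{p})dy=p^{-j}(1-p^{-1})F(p^{-j})$.

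Case (iii), $\left\vert x\right\vert _{p}=p^{-j}$, is the main step. First change variables $y\mapsto x-y$, i.e. $z=x-y$. This gives
\[
I_{j}(x)=\int_{x-p^{j}\mathbb{Z}_{p}^{\times}}F(\left\vert z\right\vert _{p})dz,
\]
and the domain equals $\{z:\left\vert x-z\right\vert _{p}=p^{-j}\}$. Every such $z$ satisfies $\left\vert z\right\vert _{p}\leq p^{-j}$. Decompose $p^{j}\mathbb{Z}_{p}$ into the spheres $S_{-j-k}$, $k\geq0$, together with $\{0\}$, which has measure zero.

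For $k\geq1$, the points $z\in S_{-j-k}$ satisfy $\left\vert x-z\right\vert _{p}=p^{-j}$ automatically. These contribute $p^{-j-k}(1-p^{-1})F(p^{-j-k})$.

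For $k=0$, we need $z\in S_{-j}$ with $z\not\equiv x \pmod{p^{j+1}}$. Write $z=p^{j}u$ and $x=p^{j}x_{0}$ with units $u,x_{0}$. The first digit of $u$ must avoid both $0$ and the first digit of $x_{0}$. That leaves $p-2$ residue classes mod $p$, so this piece has measure $p^{-j}(p-2)p^{-1}=p^{-j}(1-2p^{-1})$. It contributes $p^{-j}(1-2p^{-1})F(p^{-j})$.

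Summing the pieces gives $p^{-j}\big[(1-2p^{-1})F(p^{-j})+(1-p^{-1})\sum_{k\geq1}p^{-k}F(p^{-j-k})\big]$. The series converges absolutely because $F$ is bounded on $[0,1]$.

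This matches the stated formula (iii) only up to the factor $p^{-j}$, which the statement appears to omit. The two agree for $j=0$. I will check this normalization carefully, for instance with $F\equiv1$, where $I_{j}=p^{-j}(1-p^{-1})$ must hold. If the check confirms it, I will present the result with the factor $p^{-j}$ included. The only delicate point is the digit count in the $k=0$ piece; the rest is bookkeeping of sphere volumes.
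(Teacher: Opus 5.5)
Your proof is correct and follows essentially the paper's route: (i) and (ii) come from the ultrametric property, and (iii) comes from splitting $p^{j}\mathbb{Z}_{p}^{\times}$ into the level sets of $|x-y|_{p}$, which gives $p-2$ residue classes at level $p^{-j}$ and whole spheres of volume $p^{-j-k}(1-p^{-1})$ for $k\geq1$ (the paper rescales $y=p^{j}z$ where you translate $z=x-y$, a cosmetic difference). Your normalization check does confirm the discrepancy: for $F\equiv1$ the stated (iii) gives $1-p^{-1}$ rather than $p^{-j}(1-p^{-1})$, so the factor $p^{-j}$ is genuinely missing from the statement for $j\geq1$; the paper's proof loses it by substituting $y=p^{j}z$ while still writing $dy$ instead of $p^{-j}dz$, and you should present (iii) with the factor included.
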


\begin{proof}
Parts (i) and (ii) follow directly from the ultrametric property of
$\left\vert \cdot\right\vert _{p}$: if $\left\vert y\right\vert _{p}=p^{-j}$,
and $\left\vert x\right\vert _{p}\neq p^{-j}$, then%
\[
\left\vert x-y\right\vert _{p}=\max\left\{  \left\vert x\right\vert
_{p},p^{-j}\right\}  .
\]
To establish the last formula, we use that $x=p^{j}u$, with $u\in
\mathbb{Z}_{p}^{\times}$, then
\begin{align*}
I_{j}(x)  &  =%
{\displaystyle\int\limits_{p^{j}\mathbb{Z}_{p}^{\times}}}
F(\left\vert p^{j}u-y\right\vert _{p})dy=%
{\displaystyle\int\limits_{\mathbb{Z}_{p}^{\times}}}
F(\left\vert p^{j}\left(  u-z\right)  \right\vert _{p})dy=%
{\displaystyle\sum\limits_{k=0}^{\infty}}
\text{ \ }%
{\displaystyle\int\limits_{\substack{\left\vert u-z\right\vert _{p}%
=p^{-k}\\z\in\mathbb{Z}_{p}^{\times}}}}
F(\left\vert p^{j}\left(  u-z\right)  \right\vert _{p})dy\\
&  =%
{\displaystyle\int\limits_{\mathbb{Z}_{p}^{\times}\smallsetminus\left(
u+\mathbb{Z}_{p}^{\times}\right)  }}
F(p^{-j})dy+%
{\displaystyle\sum\limits_{k=1}^{\infty}}
\text{ \ }%
{\displaystyle\int\limits_{\substack{\left\vert u-z\right\vert _{p}%
=p^{-k}\\z\in\mathbb{Z}_{p}^{\times}}}}
F(\left\vert p^{j}\left(  u-z\right)  \right\vert _{p})dy\\
&  =\left(  1-2p^{-1}\right)  F(p^{-j})+%
{\displaystyle\sum\limits_{k=1}^{\infty}}
\text{ \ }F(p^{-j-k})%
{\displaystyle\int\limits_{u+p^{-k}\mathbb{Z}_{p}^{\times}}}
dy=\left(  1-p^{-1}\right)
{\displaystyle\sum\limits_{k=0}^{\infty}}
p^{-k}F(p^{-j-k}),
\end{align*}
were we used \ that
\[%
{\displaystyle\int\limits_{\mathbb{Z}_{p}^{\times}\smallsetminus\left(
u+\mathbb{Z}_{p}^{\times}\right)  }}
dy=%
{\displaystyle\sum\limits_{\substack{l=1\\l\neq l_{0}}}^{p-1}}
\text{ \ }%
{\displaystyle\int\limits_{l+p\mathbb{Z}_{p}}}
dy=\left(  p-2\right)  p^{-1}.
\]

\end{proof}

\begin{lemma}
\label{Lemma_A2}With\ the above notation, and with $M$ a positive integer,%
\begin{align*}
F(\left\vert x\right\vert _{p})\ast\Omega\left(  p^{M+1}\left\vert
x\right\vert _{p}\right)   &  =\left(  \text{ }%
{\displaystyle\int\limits_{p^{M+1}\mathbb{Z}_{p}^{\times}}}
F(\left\vert y\right\vert _{p})dy\right)  \Omega\left(  p^{M+1}\left\vert
x\right\vert _{p}\right)  +\\
&  p^{-M-1}%
{\displaystyle\sum\limits_{j=0}^{M-1}}
F(p^{-j})1_{S_{-j}}(x).
\end{align*}

\end{lemma}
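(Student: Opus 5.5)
The plan is to compute the convolution directly from its definition, using only the ultrametric property, as in the proof of Lemma \ref{Lemma_A1}. Since $\Omega(p^{M+1}|y|_p)$ is the characteristic function of $B_{-M-1}=p^{M+1}\mathbb{Z}_p$, we have
\[
F(|x|_p)\ast\Omega\left(p^{M+1}|x|_p\right)=\int\limits_{p^{M+1}\mathbb{Z}_p}F(|x-y|_p)\,dy ,\qquad x\in\mathbb{Z}_p .
\]
We then split according to where $x$ lies. The unit ball decomposes as $\mathbb{Z}_p=B_{-M-1}\sqcup\bigsqcup_{j=0}^{M}S_{-j}$. So it suffices to evaluate the integral separately for $|x|_p\le p^{-M-1}$ and for $|x|_p=p^{-j}$ with $0\le j\le M$.

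\emph{Case $|x|_p\le p^{-M-1}$.} Here $x\in p^{M+1}\mathbb{Z}_p$, and the ball is an additive subgroup. The change of variables $z=x-y$ therefore maps $p^{M+1}\mathbb{Z}_p$ onto itself and preserves Haar measure. This gives the constant value $\int_{p^{M+1}\mathbb{Z}_p}F(|z|_p)\,dz$, which is the coefficient multiplying $\Omega(p^{M+1}|x|_p)$.

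\emph{Case $|x|_p=p^{-j}$, $0\le j\le M$.} Every $y$ in the domain satisfies $|y|_p\le p^{-M-1}<|x|_p$. By the ultrametric property (strict inequality case, as in part (i) of Lemma \ref{Lemma_A1}), $|x-y|_p=|x|_p=p^{-j}$. Hence the integrand is the constant $F(p^{-j})$, and the integral equals $F(p^{-j})\operatorname{vol}(p^{M+1}\mathbb{Z}_p)=p^{-M-1}F(p^{-j})$. Multiplying each case by the appropriate characteristic function and adding gives the claimed decomposition.

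The computation itself is routine. The step I expect to need care is reconciling the result with the indices and domains as printed. The computation gives the ball integral $\int_{p^{M+1}\mathbb{Z}_p}F(|y|_p)\,dy$ as the coefficient of $\Omega$. This equals $\sum_{k\ge M+1}(1-p^{-1})p^{-k}F(p^{-k})$, not the sphere integral $\int_{p^{M+1}\mathbb{Z}_p^{\times}}$ written in the statement. Likewise, the sphere $S_{-M}$ (for $j=M$) also contributes the term $p^{-M-1}F(p^{-M})1_{S_{-M}}(x)$, so the sum should run over $j=0,\ldots,M$. I would therefore prove the identity in the form
\[
\Big(\int\limits_{p^{M+1}\mathbb{Z}_p}F(|y|_p)dy\Big)\Omega(p^{M+1}|x|_p)+p^{-M-1}\sum_{j=0}^{M}F(p^{-j})1_{S_{-j}}(x).
\]
I would then remark that the statement as written agrees with this version only under these readings of the domain and the summation range.
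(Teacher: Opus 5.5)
Your proposal is correct and is essentially the paper's proof. The paper also writes the convolution as $\int_{p^{M+1}\mathbb{Z}_p}F(|x-y|_p)\,dy$ and splits into two cases: $x\in p^{M+1}\mathbb{Z}_p$ (translation invariance of the ball) and $x\notin p^{M+1}\mathbb{Z}_p$ (the ultrametric property, giving $p^{-M-1}F(|x|_p)$).

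Both of your corrections are justified. The paper's own computation yields the ball integral $\int_{p^{M+1}\mathbb{Z}_p}F(|y|_p)\,dy$, not the sphere integral in the statement; already $F\equiv 1$ tells them apart. The paper's claim that $F(|x|_p)=\sum_{j=0}^{M-1}F(p^{-j})1_{S_{-j}}(x)$ off the ball overlooks $S_{-M}$, so the sum must run to $j=M$.
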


\begin{proof}
The announced formula follows from%
\begin{align*}
F(\left\vert x\right\vert _{p})\ast\Omega\left(  p^{M+1}\left\vert
x\right\vert _{p}\right)   &  =%
{\displaystyle\int\limits_{\mathbb{Z}_{p}}}
\Omega\left(  p^{M+1}\left\vert y\right\vert _{p}\right)  F(\left\vert
x-y\right\vert _{p})dy=\\%
{\displaystyle\int\limits_{p^{M+1}\mathbb{Z}_{p}}}
F(\left\vert x-y\right\vert _{p})dy  &  =\left\{
\begin{array}
[c]{ll}%
{\displaystyle\int\limits_{p^{M+1}\mathbb{Z}_{p}}}
F(\left\vert y\right\vert _{p})dy & \text{if }x\in p^{M+1}\mathbb{Z}_{p}\\
& \\
F(\left\vert x\right\vert _{p})p^{-M-1} & \text{if }x\notin p^{M+1}%
\mathbb{Z}_{p},
\end{array}
\right.
\end{align*}
by using that, if $x\notin p^{M+1}\mathbb{Z}_{p}$,
\[
F(\left\vert x\right\vert _{p})=%
{\displaystyle\sum\limits_{j=0}^{M-1}}
F(p^{-j})1_{S_{-j}}(x).
\]

\end{proof}

For $j\in\mathbb{N}=\left\{  0,1,\ldots,n,n+1,\ldots\right\}  $, we define the
sphere with center at the origin and radius $p^{-j}$ as%
\[
S_{-j}=\left\{  x\in\mathbb{Z}_{p};\text{ }\left\vert x\right\vert _{p}%
=p^{-j}\right\}  .
\]

Now, for a positive integer $M$, the set of test functions%
\begin{equation}
\left\{  1_{S_{0}}(x),1_{S_{-1}}(x),\ldots,1_{S_{-M}}(x),\Omega\left(
p^{M+1}\left\vert x\right\vert _{p}\right)  \right\}  \label{Basis}%
\end{equation}
is linearly independent, because the supports of any of two of these functions
are disjoint. We denote by $\mathcal{R}_{M}(\mathbb{Z}_{p})$, the $\mathbb{C}%
$-vector spanned by this set.

\begin{lemma}
\label{Lemma_A3}With the above notation,%
\begin{gather*}
F(\left\vert x\right\vert _{p})\ast1_{S_{0}}(x)=\left\{  \left(
1-2p^{-1}\right)  F(1)+\left(  1-p^{-1}\right)
{\displaystyle\sum\limits_{k=1}^{\infty}}
p^{-k}F\left(  p^{-k}\right)  \right\}  1_{S_{0}}(x)+\\
\left(  \text{ }%
{\displaystyle\int\limits_{\mathbb{Z}_{p}^{\times}}}
F(\left\vert y\right\vert _{p})dy\right)
{\displaystyle\sum\limits_{j=1}^{M}}
1_{S_{-j}}(x)+\left(  \text{ }%
{\displaystyle\int\limits_{\mathbb{Z}_{p}^{\times}}}
F(\left\vert y\right\vert _{p})dy\right)  \Omega\left(  p^{M+1}\left\vert
x\right\vert _{p}\right)  .
\end{gather*}

\end{lemma}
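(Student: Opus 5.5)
We apply Lemma \ref{Lemma_A1} with $j=0$ and sort the points $x\in\mathbb{Z}_{p}$ by the sphere they lie on. Since $F(\left\vert x\right\vert _{p})\ast1_{S_{0}}(x)=I_{0}(x)$ and $\mathbb{Z}_{p}$ is the disjoint union
\[
\mathbb{Z}_{p}=S_{0}\sqcup S_{-1}\sqcup\cdots\sqcup S_{-M}\sqcup p^{M+1}\mathbb{Z}_{p},
\]
it is enough to compute $I_{0}(x)$ on each piece. We then write the result as a combination of the basis (\ref{Basis}) of $\mathcal{R}_{M}(\mathbb{Z}_{p})$.

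On $S_{0}$, i.e.\ $\left\vert x\right\vert _{p}=1=p^{-0}$, we use part (iii) of Lemma \ref{Lemma_A1} with $j=0$. This gives
\[
I_{0}(x)=\left(1-2p^{-1}\right)F(1)+\left(1-p^{-1}\right)\sum_{k=1}^{\infty}p^{-k}F\left(p^{-k}\right),
\]
which is exactly the coefficient of $1_{S_{0}}(x)$ in the statement. The series converges absolutely because $F$ is continuous on $[0,1]$, hence bounded.

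For every other $x$ we have $\left\vert x\right\vert _{p}<1=p^{-0}$. This covers each $S_{-j}$ with $1\leq j\leq M$ and also the ball $p^{M+1}\mathbb{Z}_{p}$. Here part (ii) of Lemma \ref{Lemma_A1} gives
\[
I_{0}(x)=\int_{\mathbb{Z}_{p}^{\times}}F(\left\vert y\right\vert _{p})dy,
\]
a constant independent of $x$. The underlying reason is the ultrametric property: $\left\vert x-y\right\vert _{p}=\left\vert y\right\vert _{p}=1$ for all $y\in\mathbb{Z}_{p}^{\times}$ when $\left\vert x\right\vert _{p}<1$. On the complement of $S_{0}$ the convolution is therefore this constant times $\sum_{j=1}^{M}1_{S_{-j}}(x)+\Omega\left(p^{M+1}\left\vert x\right\vert _{p}\right)$. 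Adding this to the $S_{0}$ term gives the stated identity. (The constant also equals $(1-p^{-1})F(1)$, though the lemma keeps it in integral form.)

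There is no real obstacle. The only delicate point is that part (ii) must be used on $S_{-j}$ for $j\geq1$ and on the ball alike, and that part (i) is vacuous when $j=0$. Both follow from the inequality $\left\vert x\right\vert _{p}<p^{0}$ together with the disjointness of the supports in (\ref{Basis}).
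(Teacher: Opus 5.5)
Your argument is correct and is essentially the paper's proof. Both apply Lemma \ref{Lemma_A1}(ii)--(iii) with $j=0$, getting the stated coefficient on $S_{0}$ and the constant $\int_{\mathbb{Z}_{p}^{\times}}F(\left\vert y\right\vert_{p})dy$ on $p\mathbb{Z}_{p}$; the paper then expands $\Omega(p\left\vert x\right\vert_{p})=\Omega(p^{M+1}\left\vert x\right\vert_{p})+\sum_{j=1}^{M}1_{S_{-j}}(x)$, which is exactly your partition of $\mathbb{Z}_{p}\smallsetminus S_{0}$.
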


\begin{proof}
Using that%
\[
\Omega\left(  p\left\vert x\right\vert _{p}\right)  =\Omega\left(
p^{M+1}\left\vert x\right\vert _{p}\right)  +%
{\displaystyle\sum\limits_{j=1}^{M}}
1_{S_{-j}}(x),
\]
and Lemma \ref{Lemma_A1}, parts (ii)-(iii), and
\[%
{\displaystyle\int\limits_{\mathbb{Z}_{p}^{\times}}}
F(\left\vert y\right\vert _{p})dy=\left(  1-p^{-1}\right)  F(1),
\]
we have%
\begin{gather*}
F(\left\vert x\right\vert _{p})\ast1_{S_{0}}(x)=\left(  \text{ }%
{\displaystyle\int\limits_{\mathbb{Z}_{p}^{\times}}}
F(\left\vert y\right\vert _{p})dy\right)  \Omega\left(  p\left\vert
x\right\vert _{p}\right)  +\\
\left\{  \left(  1-2p^{-1}\right)  F(1)+\left(  1-p^{-1}\right)
{\displaystyle\sum\limits_{k=1}^{\infty}}
p^{-k}F\left(  p^{-k}\right)  \right\}  1_{S_{0}}(x)=\\
\left\{  \left(  1-2p^{-1}\right)  F(1)+\left(  1-p^{-1}\right)
{\displaystyle\sum\limits_{k=1}^{\infty}}
p^{-k}F\left(  p^{-k}\right)  \right\}  1_{S_{0}}(x)+\left(  \text{ }%
{\displaystyle\int\limits_{\mathbb{Z}_{p}^{\times}}}
F(\left\vert y\right\vert _{p})dy\right)
{\displaystyle\sum\limits_{j=1}^{M}}
1_{S_{-j}}(x)+\\
\left(  \text{ }%
{\displaystyle\int\limits_{\mathbb{Z}_{p}^{\times}}}
F(\left\vert y\right\vert _{p})dy\right)  \Omega\left(  p^{M+1}\left\vert
x\right\vert _{p}\right)  .
\end{gather*}

\end{proof}

\begin{lemma}\label{Lemma_A4}
With the above notation, for $j=1,\ldots,M$,%
\begin{gather*}
F(\left\vert x\right\vert _{p})\ast1_{S_{-j}}(x)=\left(  1-p^{-1}\right)
p^{-j}{\displaystyle\sum\limits_{k=0}^{j-1}}F(p^{-k})1_{S_{-k}}(x)+\\
\left\{  \left(  1-2p^{-1}\right)  F\left(  p^{-j}\right)  +\left(
1-p^{-1}\right)  {\displaystyle\sum\limits_{k=1}^{\infty}}p^{-k}F\left(
p^{-j-k}\right)  \right\}  1_{S_{-j}}(x)+\\
\left(  1-p^{-1}\right)  p^{-j}F(p^{-j}){\displaystyle\sum\limits_{k=j+1}^{M}%
}1_{S_{-k}}(x)+\left(  1-p^{-1}\right)  p^{-j}F(p^{-j})\Omega\left(
p^{M+1}\left\vert x\right\vert _{p}\right)  ,
\end{gather*}
with the convention that
\[
{\displaystyle\sum\limits_{k=M+1}^{M}}1_{S_{-k}}(x)=0.
\]

\end{lemma}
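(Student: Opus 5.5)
The plan is to compute $F(\left\vert x\right\vert_{p})\ast1_{S_{-j}}(x)=I_{j}(x)$ pointwise, using Lemma \ref{Lemma_A1}. Since $1\le j\le M$, the unit ball splits into three disjoint regions according to how $\left\vert x\right\vert_{p}$ compares with $p^{-j}$. On each region Lemma \ref{Lemma_A1} gives an explicit value. We then rewrite the resulting piecewise function in terms of the basis (\ref{Basis}) of $\mathcal{R}_{M}(\mathbb{Z}_{p})$.

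\emph{Region $\left\vert x\right\vert_{p}>p^{-j}$.} Here $x\in S_{-k}$ for some $0\le k\le j-1$. Part (i) of Lemma \ref{Lemma_A1} applies because $j\ge1$, and gives $I_{j}(x)=(1-p^{-1})p^{-j}F(\left\vert x\right\vert_{p})$. On $S_{-k}$ this equals $(1-p^{-1})p^{-j}F(p^{-k})$. Summing over $k=0,\ldots,j-1$ with the indicators $1_{S_{-k}}(x)$ produces the first term of the claimed formula.

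\emph{Region $\left\vert x\right\vert_{p}=p^{-j}$.} Part (iii) of Lemma \ref{Lemma_A1} gives $I_{j}(x)=(1-2p^{-1})F(p^{-j})+(1-p^{-1})\sum_{k\ge1}p^{-k}F(p^{-j-k})$, multiplied by $1_{S_{-j}}(x)$. The series converges absolutely because $F$ is continuous, hence bounded, on $[0,1]$. This is the second term of the formula.

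\emph{Region $\left\vert x\right\vert_{p}<p^{-j}$.} This is the ball $p^{j+1}\mathbb{Z}_{p}$. Part (ii) of Lemma \ref{Lemma_A1} gives the constant value $(1-p^{-1})p^{-j}F(p^{-j})$ there. We then use the decomposition
\[
\Omega(p^{j+1}\left\vert x\right\vert_{p})=\sum_{k=j+1}^{M}1_{S_{-k}}(x)+\Omega(p^{M+1}\left\vert x\right\vert_{p}),
\]
which is exactly the identity used in the proof of Lemma \ref{Lemma_A3}, shifted from level $1$ to level $j+1$. This yields the last two terms. When $j=M$ the sum is empty, consistent with the stated convention.

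Adding the three contributions, whose supports are disjoint, gives the lemma. The only step needing care is the bookkeeping. First, the strict inequalities in (i) and (ii) must match the index ranges $k\le j-1$ and $k\ge j+1$. Second, the case $j=M$ must be checked separately; it reduces to the empty-sum convention. Otherwise the argument is a direct assembly of Lemma \ref{Lemma_A1}.
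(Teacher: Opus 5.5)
Your route matches the paper's proof. You split $\mathbb{Z}_p$ according to whether $|x|_p$ is larger than, equal to, or smaller than $p^{-j}$, evaluate $I_j(x)$ on each piece with Lemma \ref{Lemma_A1}, and rewrite $\Omega(p^{j+1}|x|_p)$ as $\sum_{k=j+1}^{M}1_{S_{-k}}(x)+\Omega(p^{M+1}|x|_p)$. The outer region (part (i), $k\le j-1$) and the inner ball (part (ii), with the empty sum when $j=M$) are handled correctly.

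The problem is the step you treat as routine: the value on $S_{-j}$, which you import verbatim from Lemma \ref{Lemma_A1}(iii). That formula is correct only for $j=0$. Its proof substitutes $y=p^{j}z$ but keeps $dy$ instead of $p^{-j}dz$.

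Computing directly for $x\in S_{-j}$:
\begin{itemize}
\item the set $\{y\in S_{-j}:|x-y|_p=p^{-j}\}=S_{-j}\smallsetminus(x+p^{j+1}\mathbb{Z}_p)$ has measure $p^{-j}(1-2p^{-1})$;
\item for $k\ge1$, the set $\{y:|x-y|_p=p^{-j-k}\}=x+p^{j+k}\mathbb{Z}_p^{\times}$ has measure $(1-p^{-1})p^{-j-k}$.
\end{itemize}
So the coefficient of $1_{S_{-j}}(x)$ is
\[
p^{-j}\Bigl\{(1-2p^{-1})F(p^{-j})+(1-p^{-1})\sum_{k=1}^{\infty}p^{-k}F(p^{-j-k})\Bigr\}.
\]

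The test case $F\equiv1$ settles it. The left-hand side is the constant $\int_{S_{-j}}dy=(1-p^{-1})p^{-j}$, while the stated coefficient evaluates to $(1-2p^{-1})+p^{-1}=1-p^{-1}$. Hence the identity as written is off by a factor $p^{-j}$ on $S_{-j}$ for every $j=1,\ldots,M$. Your argument, like the paper's, reaches it only by citing the faulty part (iii).

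Replace that citation with the direct computation above. The rest of your assembly then goes through unchanged and proves the corrected formula.
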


\begin{proof}
Using that%
\[
\Omega\left(  \left\vert x\right\vert _{p}\right)  =\Omega\left(
p^{j+1}\left\vert x\right\vert _{p}\right)  +%
{\displaystyle\sum\limits_{k=0}^{j}}
1_{S_{-k}}(x),
\]%
\[
F(\left\vert x\right\vert _{p})\ast1_{S_{-j}}(x)=%
{\displaystyle\sum\limits_{k=0}^{j}}
I_{j}(x)1_{S_{-k}}(x)+I_{j}(x)\Omega\left(  p^{j+1}\left\vert x\right\vert
_{p}\right)  .
\]
We now use
\[
\Omega\left(  p^{j+1}\left\vert x\right\vert _{p}\right)  =\Omega\left(
p^{M+1}\left\vert x\right\vert _{p}\right)  +%
{\displaystyle\sum\limits_{k=j+1}^{M}}
1_{S_{-k}}(x),
\]
Lemma \ref{Lemma_A1}, parts (i)-(ii)-(iii) , and
\[
\left(  \text{ }{\displaystyle\int\limits_{p^{j}\mathbb{Z}_{p}^{\times}}%
}F(\left\vert y\right\vert _{p})dy\right)  =\left(  1-p^{-1}\right)
p^{-j}F(p^{-j}),
\]
to get%
\begin{gather*}
F(\left\vert x\right\vert _{p})\ast1_{S_{-j}}(x)=\left(  1-p^{-1}\right)
p^{-j}%
{\displaystyle\sum\limits_{k=0}^{j-1}}
F(p^{-k})1_{S_{-k}}(x)+\\
\left\{  \left(  1-2p^{-1}\right)  F\left(  p^{-j}\right)  +\left(
1-p^{-1}\right)
{\displaystyle\sum\limits_{k=1}^{\infty}}
p^{-k}F\left(  p^{-j-k}\right)  \right\}  1_{S_{-j}}(x)\\
+\Omega\left(  p^{j+1}\left\vert x\right\vert _{p}\right)
{\displaystyle\int\limits_{p^{j}\mathbb{Z}_{p}^{\times}}}
F(\left\vert y\right\vert _{p})dy=\left(  1-p^{-1}\right)  p^{-j}%
{\displaystyle\sum\limits_{k=0}^{j-1}}
F(p^{-k})1_{S_{-k}}(x)\\
+\left\{  \left(  1-2p^{-1}\right)  F\left(  p^{-j}\right)  +\left(
1-p^{-1}\right)
{\displaystyle\sum\limits_{k=0}^{\infty}}
p^{-k}F\left(  p^{-j-k}\right)  \right\}  1_{S_{-j}}(x)\\
+\left(  1-p^{-1}\right)  p^{-j}F(p^{-j})%
{\displaystyle\sum\limits_{k=j+1}^{M}}
1_{S_{-k}}(x)+\left(  1-p^{-1}\right)  p^{-j}F(p^{-j})\Omega\left(
p^{M+1}\left\vert x\right\vert _{p}\right)  .
\end{gather*}

\end{proof}

We now define the matrix $\mathbb{A}_{\boldsymbol{F}}=\left[  A_{i,j}\right]
_{0\leq i,j\leq M+1}$, where the columns are given as follows:%
\begin{equation}
\left[  A_{i,0}\right]  _{0\leq i\leq M+1}=\left[
\begin{array}
[c]{c}%
\begin{array}
[c]{c}%
\left[  \left(  1-2p^{-1}\right)  F(1)+\left(  1-p^{-1}\right)
{\displaystyle\sum\limits_{k=1}^{\infty}}
p^{-k}F\left(  p^{-k}\right)  \right]  _{1\times1}\\
\\
\left[
\begin{array}
[c]{c}%
\left(  1-p^{-1}\right)  F(1)\\
\vdots\\
\left(  1-p^{-1}\right)  F(1)
\end{array}
\right]  _{\left(  M+1\right)  \times1}%
\end{array}
\end{array}
\right]  , \label{A_1}%
\end{equation}
for $j=1,\ldots,M-1$,%
\begin{equation}
\left[  A_{i,j}\right]  _{0\leq i\leq M+1}=\left[
\begin{array}
[c]{c}%
\begin{array}
[c]{c}%
\left[
\begin{array}
[c]{c}%
\left(  1-p^{-1}\right)  p^{-1}F(p^{-0})\\
\vdots\\
\left(  1-p^{-1}\right)  p^{-j}F(p^{-\left(  j-1\right)  })
\end{array}
\right]  _{j\times1}%
\end{array}
\\
\\
\left[  \left(  1-2p^{-1}\right)  F\left(  p^{-j}\right)  +\left(
1-p^{-1}\right)
{\displaystyle\sum\limits_{k=1}^{\infty}}
p^{-k}F\left(  p^{-j-k}\right)  \right]  _{1\times1}\\
\\
\text{ }\left[
\begin{array}
[c]{c}%
\left(  1-p^{-1}\right)  p^{-j}F(p^{-j})\\
\vdots\\
\left(  1-p^{-1}\right)  p^{-j}F(p^{-j})
\end{array}
\right]  _{\left(  M-j+1\right)  \times1}%
\end{array}
\right]  ; \label{A_2}%
\end{equation}
for $j=M$,%
\begin{equation}
\left[  A_{i,M}\right]  _{0\leq i\leq M+1}=\left[
\begin{array}
[c]{c}%
\begin{array}
[c]{c}%
\left[
\begin{array}
[c]{c}%
\left(  1-p^{-1}\right)  p^{-1}F(p^{-0})\\
\vdots\\
\left(  1-p^{-1}\right)  p^{-j}F(p^{-\left(  j-1\right)  })
\end{array}
\right]  _{j\times1}%
\end{array}
\\
\\
\left[  \left(  1-2p^{-1}\right)  F\left(  p^{-j}\right)  +\left(
1-p^{-1}\right)
{\displaystyle\sum\limits_{k=1}^{\infty}}
p^{-k}F\left(  p^{-j-k}\right)  \right]  _{1\times1}\\
\\
\text{ }\left[
\begin{array}
[c]{c}%
0\\
\vdots\\
0\\
\left(  1-p^{-1}\right)  p^{-j}F(p^{-j})
\end{array}
\right]  _{\left(  M-j+1\right)  \times1}%
\end{array}
\right]  ; \label{A_3}%
\end{equation}
for $j=M+1$,%
\begin{equation}
\left[  A_{i,M+1}\right]  _{0\leq i\leq M+1}=\left[
\begin{array}
[c]{c}%
p^{-M-1}F(p^{-0})\\
p^{-M-1}F(p^{-1})\\
\vdots\\
p^{-M-1}F(p^{-j})\\
\vdots\\
p^{-M}F(p^{-(M)})\\
\\
\left(  1-p^{-1}\right)  p^{-M-1}F(p^{-M-1})
\end{array}
\right]  . \label{A_4}%
\end{equation}

As a consequence of the Lemmas \ref{Lemma_A1}, \ref{Lemma_A2}, \ref{Lemma_A3},
\ and \ref{Lemma_A4}, we have the following result.

\begin{theorem}
\label{Theorem_A}The $\mathbb{C}$-vector space $\mathcal{R}_{M}(\mathbb{Z}%
_{p})$ is invariant under operator $\boldsymbol{F}$, more precisely,
\[
\boldsymbol{F}:\mathcal{R}_{M}(\mathbb{Z}_{p})\rightarrow\mathcal{R}%
_{M}(\mathbb{Z}_{p})
\]
is a linear bounded operator. In the basis (\ref{Basis}), $\boldsymbol{F}$
corresponds to the operator on $\mathbb{R}^{M+2}$ induced by the matrix
$\mathbb{A}_{\boldsymbol{F}}=$ $\left[  A_{i,j}\right]  _{0\leq i,j\leq M+1}$.
Furthermore,%
\[
\left\Vert \mathbb{A}_{\boldsymbol{F}}\right\Vert =\max_{0\leq i,j\leq
M+1}\left\vert A_{i,j}\right\vert \leq\left\Vert F\right\Vert _{\infty}.
\]

\end{theorem}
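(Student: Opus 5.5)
Since the basis (\ref{Basis}) consists of functions with pairwise disjoint supports, a vector $g=\sum_{j=0}^{M}c_j1_{S_{-j}}+c_{M+1}\Omega(p^{M+1}|x|_p)$ in $\mathcal{R}_M(\mathbb{Z}_p)$ is uniquely determined by its coefficient vector $(c_0,\ldots,c_{M+1})$. By linearity of convolution, $\boldsymbol{F}g=\sum_j c_j\,\boldsymbol{F}(1_{S_{-j}})+c_{M+1}\boldsymbol{F}(\Omega(p^{M+1}|x|_p))$. The plan is therefore to show that each image of a basis vector lies again in $\mathcal{R}_M(\mathbb{Z}_p)$, and then to read off its coordinates as the corresponding column of $\mathbb{A}_{\boldsymbol{F}}$.

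For the basis vectors this is exactly what the lemmas already give. Lemma \ref{Lemma_A3} expresses $\boldsymbol{F}(1_{S_0})$ in the basis, and these coefficients form column (\ref{A_1}). Lemma \ref{Lemma_A4} does the same for $\boldsymbol{F}(1_{S_{-j}})$ with $1\le j\le M$, giving columns (\ref{A_2})--(\ref{A_3}). Lemma \ref{Lemma_A2} handles $\boldsymbol{F}(\Omega(p^{M+1}|x|_p))$ and gives column (\ref{A_4}), since its first term $\int_{p^{M+1}\mathbb{Z}_p}F(|y|_p)\,dy$ is a constant multiple of $\Omega(p^{M+1}|x|_p)$ and the remaining terms are multiples of the $1_{S_{-j}}$. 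I would check index by index that the coefficients in the lemmas match the displayed entries $A_{i,j}$, including the $\Omega$ row and the edge case $j=M$, where the sum $\sum_{k=j+1}^{M}$ is empty. This yields invariance and the identity $\boldsymbol{F}g\leftrightarrow\mathbb{A}_{\boldsymbol{F}}c$. Boundedness is automatic: $\mathcal{R}_M(\mathbb{Z}_p)$ is finite dimensional, and in any case $\|\boldsymbol{F}g\|_\rho\le\|F\|_1\|g\|_\rho$ from the preceding discussion.

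For the norm estimate, the displayed $\|\mathbb{A}_{\boldsymbol{F}}\|$ is the max-entry norm, so it suffices to bound each $|A_{i,j}|$ by $\|F\|_\infty$. The off-diagonal entries have the form $(1-p^{-1})p^{-j}F(p^{-k})$ or $p^{-M-1}F(p^{-k})$, so each is at most $\|F\|_\infty$. Each diagonal entry is bounded by
\[
\Bigl((1-2p^{-1})+(1-p^{-1})\sum_{k\ge1}p^{-k}\Bigr)\|F\|_\infty=\bigl(1-2p^{-1}+p^{-1}\bigr)\|F\|_\infty=(1-p^{-1})\|F\|_\infty .
\]
Here I use $p\ge2$, so that $1-2p^{-1}\ge0$. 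The series converges because $F$ is continuous on $[0,1]$, hence bounded.

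The main obstacle is the bookkeeping needed to match the lemmas' coefficients with the matrix entries. Some entries look inconsistent as printed: column (\ref{A_2}) writes $p^{-j}$ beside $F(p^{-(j-1)})$ where Lemma \ref{Lemma_A4} gives $p^{-j}F(p^{-k})$, and the diagonal term in the proof of Lemma \ref{Lemma_A4} uses a sum starting at $k=0$ rather than $k=1$. I would recompute those entries directly from Lemma \ref{Lemma_A1}. Any correction only involves coefficients of size at most $\|F\|_\infty$, so the norm bound is unaffected.
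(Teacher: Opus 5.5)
Your proposal follows the paper's own route (the paper obtains Theorem~\ref{Theorem_A} directly by reading off the columns of $\mathbb{A}_{\boldsymbol{F}}$ from Lemmas~\ref{Lemma_A1}--\ref{Lemma_A4}), and your entrywise estimate supplies the norm bound that the paper only asserts. When you recompute entries, note that Lemma~\ref{Lemma_A1}(iii) omits the Jacobian $p^{-j}$ of the substitution $y=p^{j}z$ (the test $F\equiv1$ shows that the diagonal entry of column $j$ must be $(1-p^{-1})p^{-j}$), and that the corner entry $A_{M+1,M+1}$ is $\int_{p^{M+1}\mathbb{Z}_{p}}F(\left\vert y\right\vert _{p})\,dy$ rather than the one-shell value printed in (\ref{A_4}); the corrected entries have modulus at most $(1-p^{-1})p^{-j}\left\Vert F\right\Vert _{\infty}$ and $p^{-M-1}\left\Vert F\right\Vert _{\infty}$ respectively, so your bound $\max_{i,j}\left\vert A_{i,j}\right\vert \leq\left\Vert F\right\Vert _{\infty}$ still holds.
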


\section{Pseudo-traveling waves for $p$-adic Schr\"{o}dinger equations with
non-local potentials}

We now apply the results of the previous section to approximate traveling
waves of the Schr\"{o}dinger equation (\ref{EQ_SCHRODINGER_4}), under the
hypotheses that $J\left(  \left\vert x\right\vert _{p}\right)  $, $W\left(
\left\vert x\right\vert _{p}\right)  $, $Z(\left\vert x\right\vert _{p})$ are
radial continuous functions defined on $\mathbb{Z}_{p}$,\ with $\int
_{\mathbb{Z}_{p}}J\left(  \left\vert x\right\vert _{p}\right)  dx=1$, and that
$\phi$ is Lipschitz.

We are interested in traveling waves, i.e., functions of type, $\Psi\left(
\left\vert x\right\vert _{p}+tv\right)  $, where $\Psi:\left[  0,\infty
\right)  \rightarrow\mathbb{C}$, $x\in\mathbb{Z}_{p}$, $t\geq0$, and $v>0$.
Notice that
\begin{align*}
\Psi\left(  \left\vert x\right\vert _{p}+tv\right)   &  =%
{\displaystyle\sum\limits_{j=0}^{\infty}}
\Psi\left(  p^{-j}+tv\right)  1_{S_{-j}}(x)\\
&  =%
{\displaystyle\sum\limits_{j=0}^{M}}
\Psi\left(  p^{-j}+tv\right)  1_{S_{-j}}(x)+\Omega\left(  p^{M+1}\left\vert
x\right\vert _{p}\right)  \Psi\left(  \left\vert x\right\vert _{p}+tv\right)
,
\end{align*}
for an arbitrary positive integer $M$. If $M$ is sufficiently large, and
$\Psi\left(  \left\vert x\right\vert _{p}+tv\right)  $ is continuous, \ then%
\[
\Omega\left(  p^{M+1}\left\vert x\right\vert _{p}\right)  \Psi\left(
\left\vert x\right\vert _{p}+tv\right)  \approx\Omega\left(  p^{M+1}\left\vert
x\right\vert _{p}\right)  \Psi\left(  tv\right)  .
\]
For this reason, solutions of the form (pseudo-traveling waves)%
\begin{equation}
\Psi\left(  x,t\right)  =%
{\displaystyle\sum\limits_{j=0}^{M}}
\Psi_{j}\left(  p^{-j}+tv\right)  1_{S_{-j}}(x)+\Omega\left(  p^{M+1}%
\left\vert x\right\vert _{p}\right)  \Psi_{M+1}\left(  tv\right)
\in\mathcal{R}_{M}(\mathbb{Z}_{p}).\label{Radial_solution}%
\end{equation}
are good approximations for traveling waves, under the hypothesis that
$\Psi\left(  \left\vert x\right\vert _{p}+tv\right)  $ is continuous. If this
hypothesis is not true, we show in this section that pseudo-traveling waves
are solutions of (\ref{EQ_SCHRODINGER_4}).

Since, the space $\mathcal{R}_{M}(\mathbb{Z}_{p})$ is invariant under
convolution operators, more precisely under $f(x)\rightarrow\left(
\boldsymbol{J}f\right)  \left(  x\right)  =J\left(  \left\vert x\right\vert
_{p}\right)  \ast f(x)$, $f(x)\rightarrow\left(  \boldsymbol{W}f\right)
\left(  x\right)  =W\left(  \left\vert x\right\vert _{p}\right)  \ast f(x)$,
then, the Schr\"{o}dinger equation (\ref{EQ_SCHRODINGER_4}), has a natural
discretization to $\mathcal{R}_{M}(\mathbb{Z}_{p})$, obtained by replacing
(\ref{Radial_solution}) in (\ref{EQ_SCHRODINGER_4}), under the assumption that%
\begin{equation}
Z(\left\vert x\right\vert _{p})=%
{\displaystyle\sum\limits_{j=0}^{M}}
Z\left(  p^{-j}\right)  1_{S_{-j}}(x)+Z(0)\Omega\left(  p^{M+1}\left\vert
x\right\vert _{p}\right)  \in\mathcal{R}_{M}(\mathbb{Z}_{p}).
\label{Zeta_Function}%
\end{equation}
The discretization of\ (\ref{EQ_SCHRODINGER_4}) is%
\begin{gather*}
i\frac{\partial}{\partial t}\left[
\begin{array}
[c]{c}%
\Psi_{0}\left(  p^{-0}+tv\right) \\
\vdots\\
\Psi_{M}\left(  p^{-M}+tv\right) \\
\Psi_{M+1}\left(  tv\right)
\end{array}
\right]  =\left(  \mathbb{I}-\mathbb{A}_{\boldsymbol{J}}\right)  \left[
\begin{array}
[c]{c}%
\Psi_{0}\left(  p^{-0}+tv\right) \\
\vdots\\
\Psi_{M}\left(  p^{-M}+tv\right) \\
\Psi_{M+1}\left(  tv\right)
\end{array}
\right]  +\\
\mathbb{W}_{\boldsymbol{J}}\left(  \left[
\begin{array}
[c]{c}%
\phi\left(  \Psi_{0}\left(  p^{-0}+tv\right)  \right) \\
\vdots\\
\phi\left(  \Psi_{M}\left(  p^{-M}+tv\right)  \right) \\
\phi\left(  \Psi_{M+1}\left(  tv\right)  \right)
\end{array}
\right]  \right)  +\left[
\begin{array}
[c]{c}%
Z\left(  p^{-0}\right) \\
\vdots\\
Z\left(  p^{-M}\right) \\
Z\left(  0\right)
\end{array}
\right]  ,
\end{gather*}
where the matrices $\mathbb{A}_{\boldsymbol{J}}$, $\mathbb{W}_{\boldsymbol{J}%
}$ are defined in Theorem \ref{Theorem_A}.

Notice that $\Psi_{j}\left(  p^{-j}+tv\right)  =\left(  \Psi_{j}\circ
T_{j}\right)  \left(  tv\right)  :=\rho_{j}\left(  tv\right)  $, where
$T_{j}\left(  tv\right)  =\left(  p^{-j}+tv\right)  $, for $j=0,\ldots,M$, and
$\Psi_{M+1}\left(  tv\right)  :=\rho_{M+1}\left(  tv\right)  $.

Now, \ take we consider the system of ODEs%
\begin{gather*}
iv\frac{\partial}{\partial s}\left[
\begin{array}
[c]{c}%
\rho_{0}\left(  s\right) \\
\vdots\\
\rho_{M}\left(  s\right) \\
\rho_{M+1}\left(  s\right)
\end{array}
\right]  =\left(  \mathbb{I}-\mathbb{A}_{\boldsymbol{J}}\right)  \left[
\begin{array}
[c]{c}%
\rho_{0}\left(  s\right) \\
\vdots\\
\rho_{M}\left(  s\right) \\
\beta_{M+1}\left(  s\right)
\end{array}
\right]  +\mathbb{W}_{\boldsymbol{J}}\left(  \left[
\begin{array}
[c]{c}%
\phi\left(  \rho_{0}\left(  s\right)  \right) \\
\vdots\\
\phi\left(  \sigma_{M}\left(  s\right)  \right) \\
\phi\left(  \rho_{M+1}\left(  s\right)  \right)
\end{array}
\right]  \right) \\
+\left[
\begin{array}
[c]{c}%
\beta_{0}\\
\vdots\\
\beta_{M}\\
\beta_{M+1}%
\end{array}
\right]  ,
\end{gather*}
where%
\[
\left[
\begin{array}
[c]{c}%
\rho_{0}\left(  s\right) \\
\vdots\\
\rho_{M}\left(  s\right) \\
\rho_{M+1}\left(  s\right)
\end{array}
\right]  \in\mathbb{C}^{M+2}\text{, and }\left[
\begin{array}
[c]{c}%
\beta_{0}\\
\vdots\\
\beta_{M}\\
\beta_{M+1}%
\end{array}
\right]  =\left[
\begin{array}
[c]{c}%
Z\left(  p^{-0}\right) \\
\vdots\\
Z\left(  p^{-M}\right) \\
Z\left(  0\right)
\end{array}
\right]  .
\]

Take%
\begin{align*}
\rho &  :=\left[
\begin{array}
[c]{c}%
\rho_{0}\\
\vdots\\
\rho_{M}\\
\rho_{M+1}%
\end{array}
\right]  \text{, }\sigma:=\left[
\begin{array}
[c]{c}%
\sigma_{0}\\
\vdots\\
\sigma_{M}\\
\sigma_{M+1}%
\end{array}
\right]  \text{, }\phi\left(  \rho\right)  :=\left[
\begin{array}
[c]{c}%
\phi\left(  \rho_{0}\right) \\
\vdots\\
\phi\left(  \rho_{M}\right) \\
\phi\left(  \rho_{M+1}\right)
\end{array}
\right] \\
\text{ }\phi\left(  \sigma\right)   &  =\left[
\begin{array}
[c]{c}%
\phi\left(  \sigma_{0}\right) \\
\vdots\\
\phi\left(  \sigma_{M}\right) \\
\phi\left(  \sigma_{M+1}\right)
\end{array}
\right]  \text{, }\beta=\left[
\begin{array}
[c]{c}%
\beta_{0}\\
\vdots\\
\beta_{M}\\
\beta_{M+1}%
\end{array}
\right]  \in\mathbb{C}^{M+2},
\end{align*}
and
\[
\mathbb{H}\left(  \rho\right)  =\left(  \mathbb{I}-\mathbb{A}_{\boldsymbol{J}%
}\right)  \rho+\mathbb{W}_{\boldsymbol{J}}\phi\left(  \rho\right)  +\beta.
\]
Then%
\begin{gather*}
\left\Vert \mathbb{H}\left(  \rho\right)  -\mathbb{H}\left(  \sigma\right)
\right\Vert \leq\left\Vert \left(  \mathbb{I}-\mathbb{A}_{\boldsymbol{J}%
}\right)  \left(  \rho-\sigma\right)  \right\Vert +\left\Vert \mathbb{W}%
_{\boldsymbol{J}}\left(  \phi\left(  \rho\right)  -\phi\left(  \sigma\right)
\right)  \right\Vert \leq\\
\left(  \left\Vert \mathbb{I}-\mathbb{A}_{\boldsymbol{J}}\right\Vert \right)
\left\Vert \rho-\sigma\right\Vert +\left\Vert \mathbb{W}_{\boldsymbol{J}%
}\right\Vert \left\Vert \phi\left(  \rho\right)  -\phi\left(  \sigma\right)
\right\Vert \leq\\
\left(  \left\Vert \mathbb{I}-\mathbb{A}_{\boldsymbol{J}}\right\Vert +L_{\phi
}\left\Vert \mathbb{W}_{\boldsymbol{J}}\right\Vert \right)  \left\Vert
\rho-\sigma\right\Vert =L_{\mathbb{H}}\left\Vert \rho-\sigma\right\Vert ,
\end{gather*}

which is $\mathbb{H}:\mathbb{C}^{M+2}\rightarrow\mathbb{C}^{M+2}$ is a
globally Lipschitz mapping.

The following result follows directly from the fundamental theorem for ODEs:

\begin{proposition}
If $L_{\mathbb{H}}\in\left(  0,v\right)  $, then, the initial value problem%
\begin{equation}
\left\{
\begin{array}
[c]{l}%
\rho\left(  s\right)  \in C^{1}\left(  \left[  0,\infty\right)  ,\mathbb{C}%
^{M+2}\right) \\
\\
i\frac{\partial}{\partial s}\rho\left(  s\right)  =\frac{1}{v}\mathbb{H}%
\left(  \rho\left(  s\right)  \right) \\
\\
\rho\left(  0\right)  =\omega\in\mathbb{C}^{M+2}%
\end{array}
\right.  \label{IVP_Equation}%
\end{equation}
has a unique solution, where $\omega$ is an arbitrary complex number.
\end{proposition}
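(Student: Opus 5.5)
Rewrite the system as a first-order autonomous ODE in normal form and apply the global Picard--Lindel\"{o}f theorem. Multiplying by $-i$, the problem (\ref{IVP_Equation}) is equivalent to
\[
\frac{d}{ds}\rho(s)=G(\rho(s)),\qquad G(\rho):=-\frac{i}{v}\,\mathbb{H}(\rho),\qquad \rho(0)=\omega\in\mathbb{C}^{M+2}.
\]
We identify $\mathbb{C}^{M+2}$ with $\mathbb{R}^{2M+4}$, so that $G$ is a map of a finite-dimensional real Banach space into itself. Here $\omega$ is a vector of initial data in $\mathbb{C}^{M+2}$, not a single complex number.

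The key estimate was obtained just before the statement. It gives $\left\Vert \mathbb{H}(\rho)-\mathbb{H}(\sigma)\right\Vert \leq L_{\mathbb{H}}\left\Vert \rho-\sigma\right\Vert$ with $L_{\mathbb{H}}=\left\Vert \mathbb{I}-\mathbb{A}_{\boldsymbol{J}}\right\Vert +L_{\phi}\left\Vert \mathbb{W}_{\boldsymbol{J}}\right\Vert$. This bound uses the boundedness of the matrices from Theorem \ref{Theorem_A} and the Lipschitz property of the componentwise extension $\phi(a+ib)=\phi(a)+i\phi(b)$; that extension is Lipschitz on $\mathbb{C}$ with constant at most $\sqrt{2}L_\phi$, which can be absorbed into $L_{\mathbb{H}}$ if needed. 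Hence $G$ is globally Lipschitz with constant $L_{\mathbb{H}}/v$. Continuity of $G$ follows from this, and $G$ is autonomous, so it is trivially continuous in $s$.

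I will then run the standard argument. Define the Picard operator
\[
(\mathcal{T}\rho)(s)=\omega+\int_0^s G(\rho(r))\,dr
\]
on $C([0,T],\mathbb{C}^{M+2})$ with the sup norm. The hypothesis $L_{\mathbb{H}}<v$ gives Lipschitz constant $L_{\mathbb{H}}/v<1$, so $\mathcal{T}$ is a contraction on $[0,1]$ directly. For a general $T$, I will use the Bielecki weighted norm $\sup_s e^{-2Ls}\|\rho(s)\|$ with $L=L_{\mathbb{H}}/v$, which makes $\mathcal{T}$ a contraction with constant $1/2$. Banach's fixed point theorem then gives a unique continuous solution of the integral equation on every $[0,T]$. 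By uniqueness these solutions patch together into a unique solution on $[0,\infty)$. The fundamental theorem of calculus shows the solution is $C^1$, since $G\circ\rho$ is continuous. Equivalence of the integral formulation with (\ref{IVP_Equation}) gives uniqueness in $C^1$.

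There is no genuine obstacle here. The only point needing care is that the Lipschitz bound is global and not merely local, which rules out finite-time blow-up and yields existence on all of $[0,\infty)$. Gronwall's inequality applied to $\|\rho(s)\|\le\|\omega\|+\int_0^s(L\|\rho\|+\|G(0)\|)$ gives an alternative route to the same no-blow-up conclusion. The condition $L_{\mathbb{H}}<v$ is stronger than needed; it only simplifies the contraction on unit time intervals.
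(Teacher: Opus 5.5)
Your proposal is correct and follows the paper's route. The paper simply invokes the fundamental (global Picard--Lindel\"of) theorem for ODEs, using the global Lipschitz bound on $\mathbb{H}$ derived just before the statement; you spell that theorem out via the Bielecki-norm contraction, and you rightly note that $\omega$ is a vector and that $L_{\mathbb{H}}<v$ is not needed (the componentwise extension of $\phi$ is in fact Lipschitz on $\mathbb{C}$ with constant $L_\phi$ itself, and your Bielecki argument works with any finite global Lipschitz constant).
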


\begin{theorem}
\label{Theorem-3A}Assume that $J\left(  \left\vert x\right\vert _{p}\right)
$, $W\left(  \left\vert x\right\vert _{p}\right)  $, are radial continuous
functions defined on $\mathbb{Z}_{p}$,\ with $\int_{\mathbb{Z}_{p}}J\left(
\left\vert x\right\vert _{p}\right)  dx=1$, $\phi$ is Lipschitz, and
$L_{\mathbb{H}}\in\left(  0,v\right)  $, with $v>0$. Furthermore, that
$Z(\left\vert x\right\vert _{p})\in\mathcal{R}_{M}(\mathbb{Z}_{p})$ as in
(\ref{Zeta_Function}). Then, the Schr\"{o}dinger equation
(\ref{EQ_SCHRODINGER_4}) admits a pseudo-traveling wave of the form%
\[
{\displaystyle\sum\limits_{j=0}^{M}}
\rho_{j}\left(  p^{-j}+tv\right)  1_{S_{-j}}(x)+\Omega\left(  p^{M+1}%
\left\vert x\right\vert _{p}\right)  \rho_{M+1}\left(  tv\right)  ,
\]
where $\rho$ is a solution of the initial valued problem (\ref{IVP_Equation}),
and $v$ is an arbitrary real number.
\end{theorem}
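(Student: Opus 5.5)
The plan is to reduce the statement to the invariance result Theorem \ref{Theorem_A} combined with the preceding Proposition on the initial value problem (\ref{IVP_Equation}). The candidate solution is
\[
\Psi(x,t)=\sum_{j=0}^{M}\rho_{j}(tv)1_{S_{-j}}(x)+\Omega\left(p^{M+1}\left\vert x\right\vert_{p}\right)\rho_{M+1}(tv),
\]
with $\rho=(\rho_0,\ldots,\rho_{M+1})$ the unique $C^1$ solution of (\ref{IVP_Equation}). Following the notation of the paper, the $j$-th coefficient is read as $\Psi_j(p^{-j}+tv)=\rho_j(tv)$, i.e.\ $\Psi_j:=\rho_j\circ T_j^{-1}$. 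Thus the displayed pseudo-traveling wave is exactly this element of $\mathcal{R}_M(\mathbb{Z}_p)$, and each coefficient is a $C^1$ function of $t$.

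First I substitute $\Psi$ into each term of (\ref{EQ_SCHRODINGER_4}) and express everything in the basis (\ref{Basis}).

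\textbf{Time derivative.} Since the basis functions do not depend on $t$, the chain rule gives $i\partial_t\Psi=\sum_k iv\rho_k'(tv)e_k$, where $e_k$ is the $k$-th basis function.

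\textbf{Linear diffusion part.} By Theorem \ref{Theorem_A} (Lemmas \ref{Lemma_A1}--\ref{Lemma_A4} with $F=J$), $\Psi-J\ast\Psi$ has coordinate vector $(\mathbb{I}-\mathbb{A}_{\boldsymbol{J}})\rho(tv)$.

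\textbf{Nonlinear part.} Because the supports of the basis functions are disjoint and cover $\mathbb{Z}_p$, $\phi(\Psi(x,t))=\sum_k\phi(\rho_k(tv))e_k(x)$ pointwise. This holds also for the complex extension $\phi(a+ib)=\phi(a)+i\phi(b)$, since the identity is evaluated pointwise on each piece. Hence $\phi(\Psi(\cdot,t))\in\mathcal{R}_M(\mathbb{Z}_p)$, and applying Theorem \ref{Theorem_A} with $F=W$, the integral term has coordinates $\mathbb{W}_{\boldsymbol{J}}\phi(\rho(tv))$, where $\mathbb{W}_{\boldsymbol{J}}$ denotes the matrix $\mathbb{A}_{\boldsymbol{W}}$.

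\textbf{Source term.} By hypothesis (\ref{Zeta_Function}), $Z$ has coordinate vector $\beta$.

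With these four identifications, (\ref{EQ_SCHRODINGER_4}) holds for all $x\in\mathbb{Z}_p$ and $t\ge0$ if and only if the coordinate identity
\[
iv\rho'(s)=\mathbb{H}(\rho(s)),\qquad s=tv,
\]
holds. This is exactly the equation in (\ref{IVP_Equation}). By the Proposition, which uses that $\mathbb{H}$ is globally Lipschitz with constant $L_{\mathbb{H}}<v$, a unique global $C^1$ solution exists for any initial datum $\omega$, and the converse direction closes the argument. Since $v$ enters only through the rescaling $s=tv$, any $v$ satisfying the Lipschitz hypothesis works.

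\textbf{Main obstacle.} The delicate step is the third one. I must justify that the nonlinearity preserves $\mathcal{R}_M(\mathbb{Z}_p)$, which is where the disjoint-support structure is used. I must also check that the matrix of the convolution operator in Theorem \ref{Theorem_A} is applied with the right orientation, coordinates as columns, so that the coefficient comparison is legitimate by linear independence of (\ref{Basis}).

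A minor point is that the pairing of $\rho_j$ with the shifted argument $p^{-j}+tv$ is only notational. Evaluating at $p^{-j}+tv$ literally would shift the time variable differently in each coordinate and break the coupled system. I will therefore state explicitly that the $j$-th coefficient is $\rho_j(tv)=\Psi_j(p^{-j}+tv)$.
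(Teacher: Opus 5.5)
Your proposal is correct and follows the paper's own route. You substitute the ansatz into (\ref{EQ_SCHRODINGER_4}), use the invariance of $\mathcal{R}_M(\mathbb{Z}_p)$ under $\boldsymbol{J}$ and $\boldsymbol{W}$ (Theorem \ref{Theorem_A}) together with $Z\in\mathcal{R}_M(\mathbb{Z}_p)$ to reduce to $iv\rho'(s)=\mathbb{H}(\rho(s))$, and conclude by the Proposition on (\ref{IVP_Equation}). Your explicit check that $\phi$ preserves $\mathcal{R}_M(\mathbb{Z}_p)$, because the basis supports partition $\mathbb{Z}_p$, is used only implicitly in the paper. Your reading of the $j$-th coefficient as $\rho_j(tv)=\Psi_j(p^{-j}+tv)$ matches the paper's derivation and is in fact necessary, since the literal $\rho_j(p^{-j}+tv)$ in the statement would not solve the coupled system.
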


\begin{remark}
(i) The condition  $\int_{\mathbb{Z}_{p}}J\left( \left\vert x\right\vert
_{p}\right) dx=1$ is not required.

\noindent(ii) The above theorem is also valid for $p$-adic CNNs of the form( \ref{CNN-1}).
\end{remark}

If there exists a traveling wave $\Psi(\left\vert x\right\vert _{p}+tv)$
belonging to $L^{\rho}(\mathbb{Z}_{p})$, for some $\rho\in\left[
1,\infty\right]  $, and for $t\geq0$, it can be approximated by a continuous
radial function in $\left\vert x\right\vert _{p}$, that in turn can be well
approximated by pseudo-traveling waves. If the continuity condition is not
valid, then the pseudo-traveling waves are solutions of QCNNs of type
(\ref{EQ_SCHRODINGER_4}). Similar observations are valid for $p$-adic CNNs of
type (\ref{EQ_1}).

On the other hand, \ bump solutions of (\ref{EQ_10}) can be approximated as
\begin{equation}
\Psi\left(  x\right)  \approx%
{\displaystyle\sum\limits_{j=0}^{M}}
\Psi_{j}\left(  p^{-j}\right)  1_{S_{-j}}(x)+\Omega\left(  p^{M+1}\left\vert
x\right\vert _{p}\right)  \Psi_{M+1}\left(  0\right)  \in\mathcal{R}%
_{M}(\mathbb{Z}_{p}). \label{Approximation}%
\end{equation}
Indeed, by the Theorem \ref{Theorem-2A}, $\Psi\left(  x\right)  \in
L^{1}\left(  \mathbb{Z}_{p}\right)  $, since the continuous function are dense
in $L^{1}\left(  \mathbb{Z}_{p}\right)  $, to obtain a good approximation of
$\Psi\left(  x\right)  $, we may assume without loss of generality that
$\Psi\left(  x\right)  $ is continuous. For continuous functions, the
approximation (\ref{Approximation}) is valid. Similar observations are valid
for the bump solutions of (\ref{EQ_10B}).

\section{Numerical simulations}

In the numerical simulation, we pick the kernel $J(x)$ to be the Bessel
potential:%
\[
J_{\alpha }\left( x\right) =\frac{1-p^{-\alpha }}{1-p^{\alpha -1}}\left\{
\left\vert x\right\vert _{p}^{\alpha -1}-p^{\alpha -1}\right\} \Omega \left(
\left\vert x\right\vert _{p}\right) ,
\]%
where $\alpha \in \mathbb{R}\smallsetminus \left\{ 1\right\} $. This
function satisfies $\ J_{\alpha }\left( x\right) \geq 0$, $\left\Vert
J_{\alpha }\right\Vert _{1}=1$, see \cite[Lemma 5.2]{Taibleson}. The
parameter $\alpha $ is set to $1.5$ in all our simulations. The
activation function is fixed as 
\[
\phi \left( s\right) =\frac{1}{2}\left( \left\vert s+1\right\vert
+\left\vert s-1\right\vert \right) \text{, }s\in \mathbb{R}.
\]%
This activation function is widely used in cellular neural networks, see 
\cite{Chua-Tamas, Chua, Slavova, Zambrano-Zuniga-1, Zambrano-Zuniga-2,
Zuniga et al}.

The prime $p$ is fixed at $2$, and $l=10$.  The time interval used in
the simulations is $\left[ 0,200\right] $, with a step $\Delta t=0.05$. The
Parameter $M$ in the approximation (\ref{Radial_solution}) is fixed to $9$.
The labels in the vertical scale correspond to the terms $\left\vert \Psi
_{j}\left( p^{-j}+tv\right) \right\vert 1_{S_{-j}}(x)$, for $j=0,\ldots ,9$,
and $\Omega \left( p^{10}\left\vert x\right\vert _{p}\right) \left\vert \Psi
_{10}\left( tv\right) \right\vert $. \ The Initial datum used in all the
experiments is $\Omega (p^{2}|x|_{p})$.

Another important choice is the kernel $W(x,y)$, which determines the
strength of the connection between neurons $x$ and $y$. A natural way to fix 
$W(x,y)$ is to train the network to perform a particular task. This
machine-learning approach is not used here.  In the numerical simulations, we use two different kernel types. The first, $W(x,y)=0$, in this case, the neurons are
not connected. In the second type $W(x,y)=W_{0}$ constant, which means that 
\begin{equation}
\int\limits_{\mathbb{Z}_{p}}W(x,y)\phi \left( \Psi \left( y,t\right) \right)
dy=W_{0}\int\limits_{\mathbb{Z}_{p}}\phi \left( \Psi \left( y,t\right)
\right) dy.  \label{Term_W}
\end{equation}%
By interpreting $\phi \left( \Psi \left( y,t\right) \right) $ as the output
of the neuron located at position $y$ at the time $t$, the formula (\ref%
{Term_W}) gives a weighted average of the neurons' outputs at time $t$.

\begin{figure}[htbp]
    \centering
    \begin{subfigure}{0.48\textwidth}
        \includegraphics[width=\linewidth]{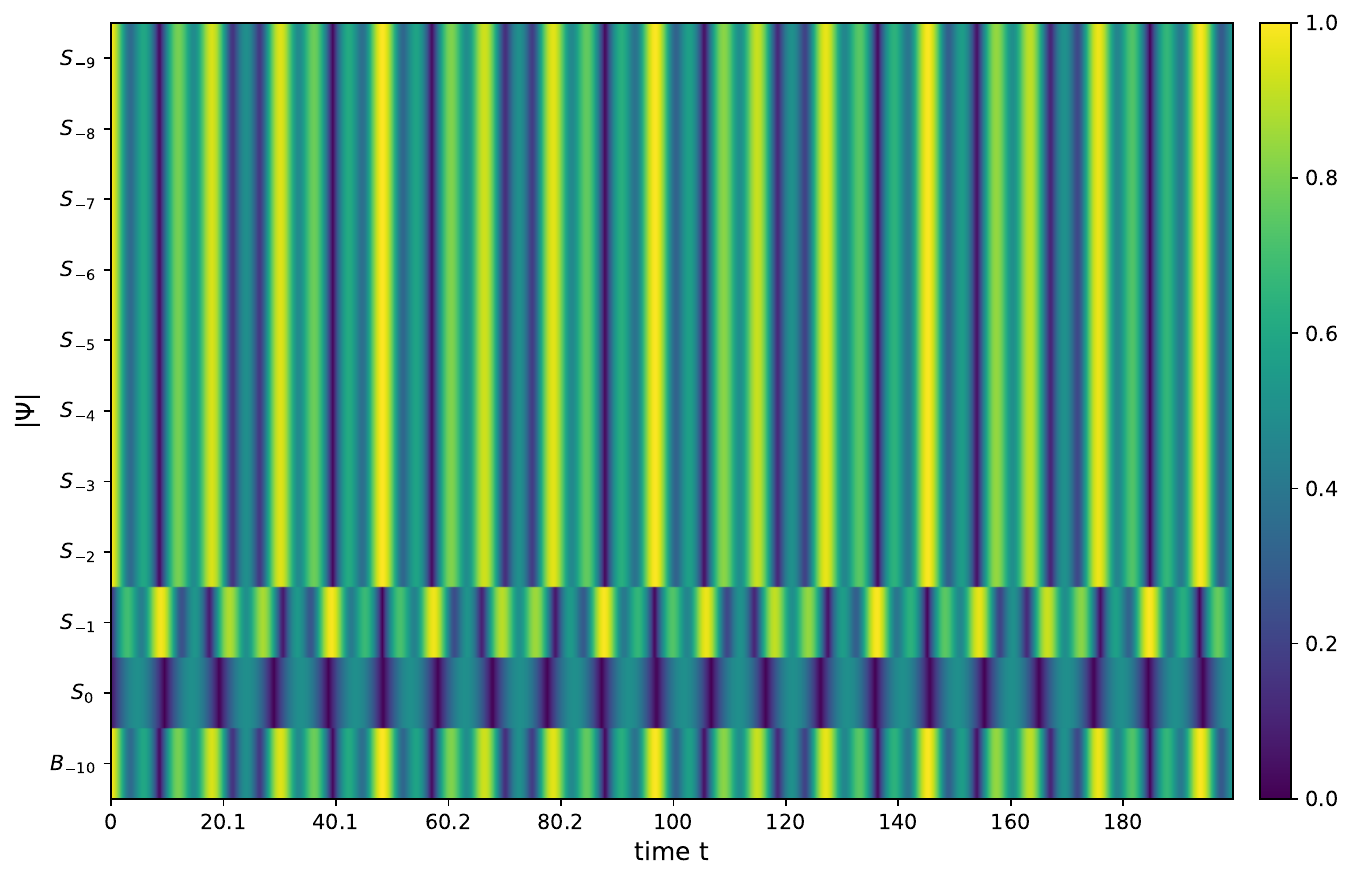}
        \caption{Amplitude for $v=56$, $W=0$, $Z=0$}
    \end{subfigure}
    \hfill
    \begin{subfigure}{0.48\textwidth}
        \includegraphics[width=\linewidth]{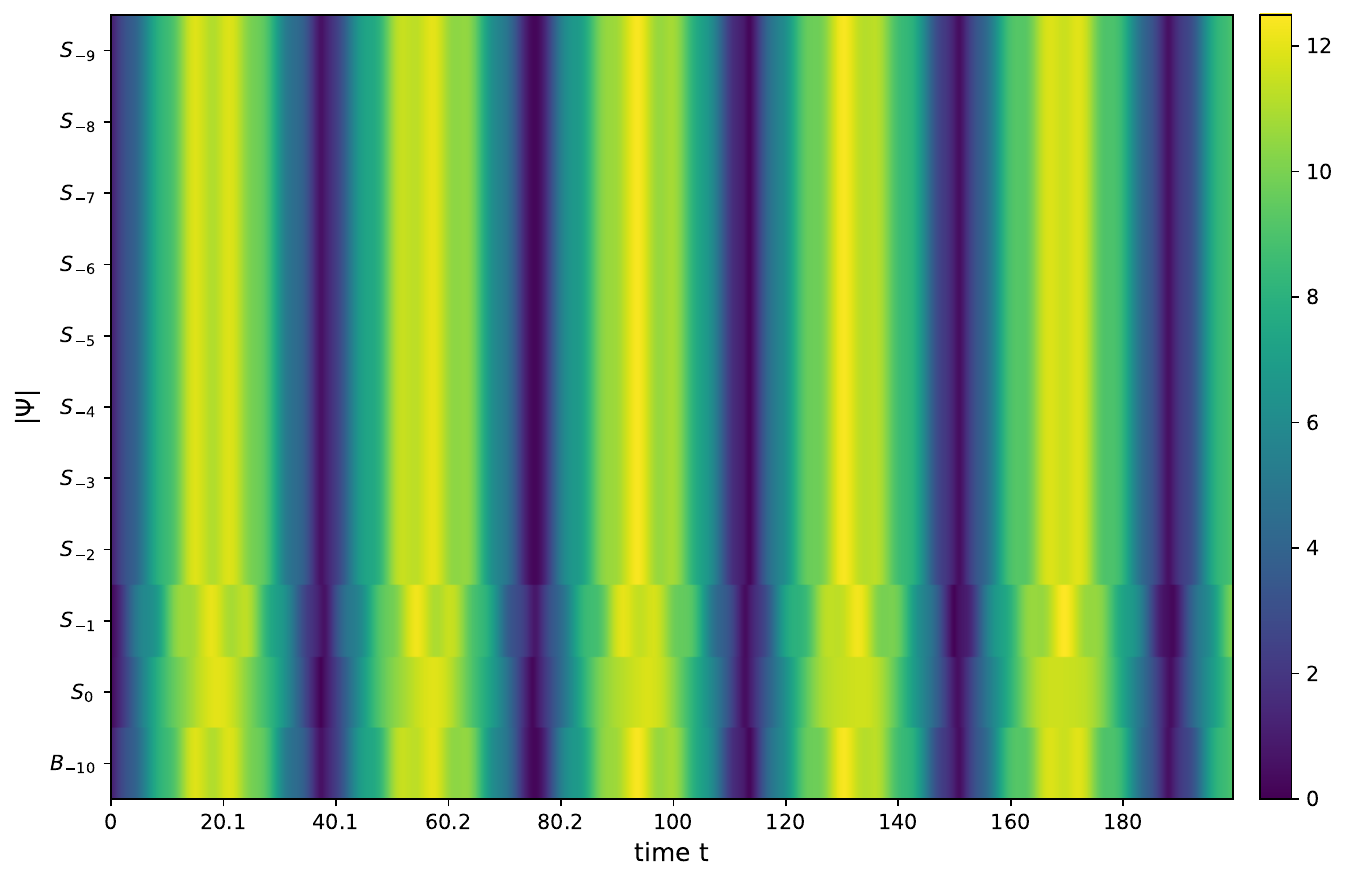}
        \caption{Amplitude for $v=56$, $W=0$, $Z=1$}
    \end{subfigure}
    
    \vspace{1em} 

    \begin{subfigure}{0.48\textwidth}
        \includegraphics[width=\linewidth]{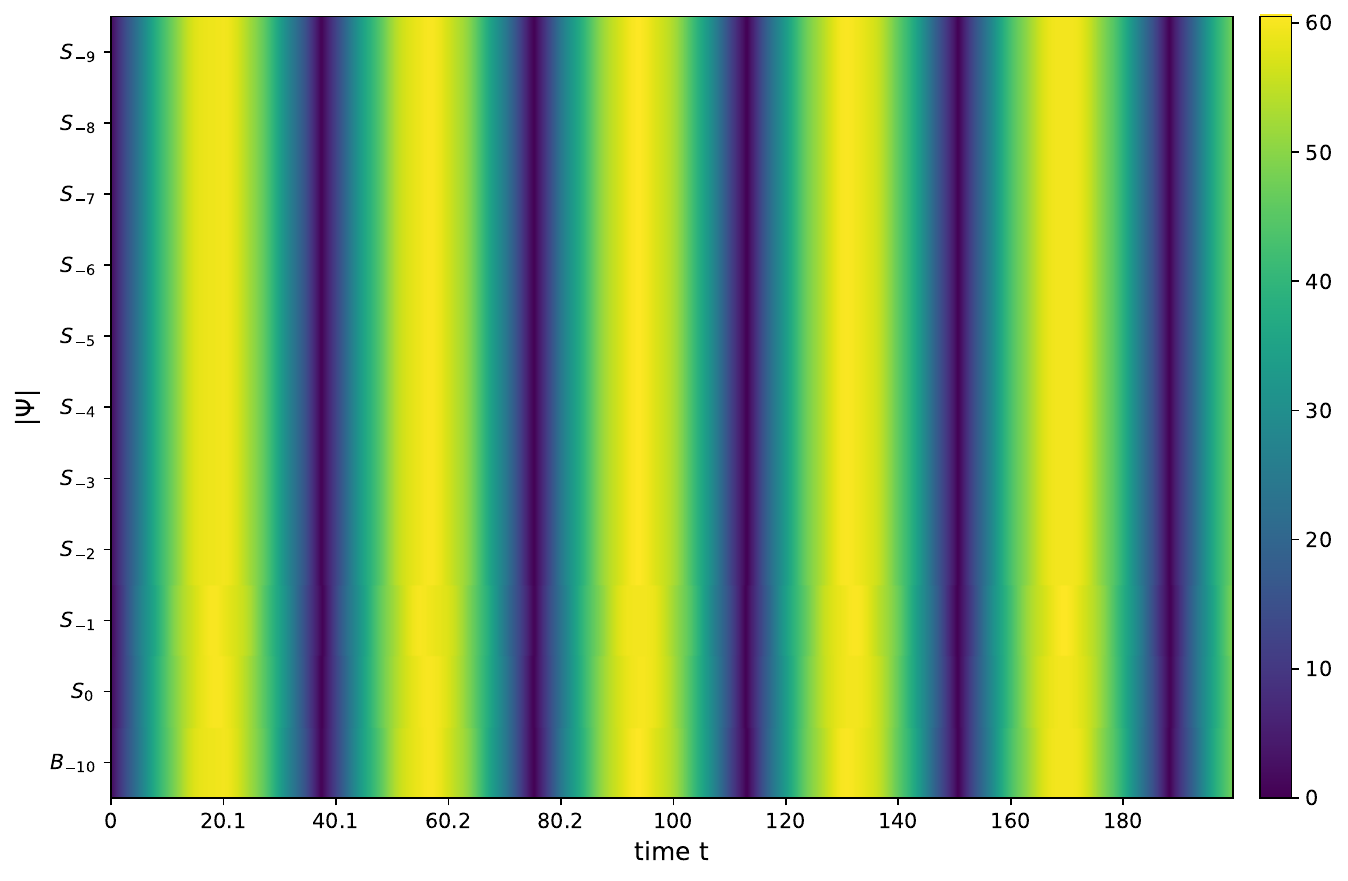}
        \caption{Amplitude for $v=56$, $W=0$, $Z=5$}
    \end{subfigure}
    \hfill
    \begin{subfigure}{0.48\textwidth}
        \includegraphics[width=\linewidth]{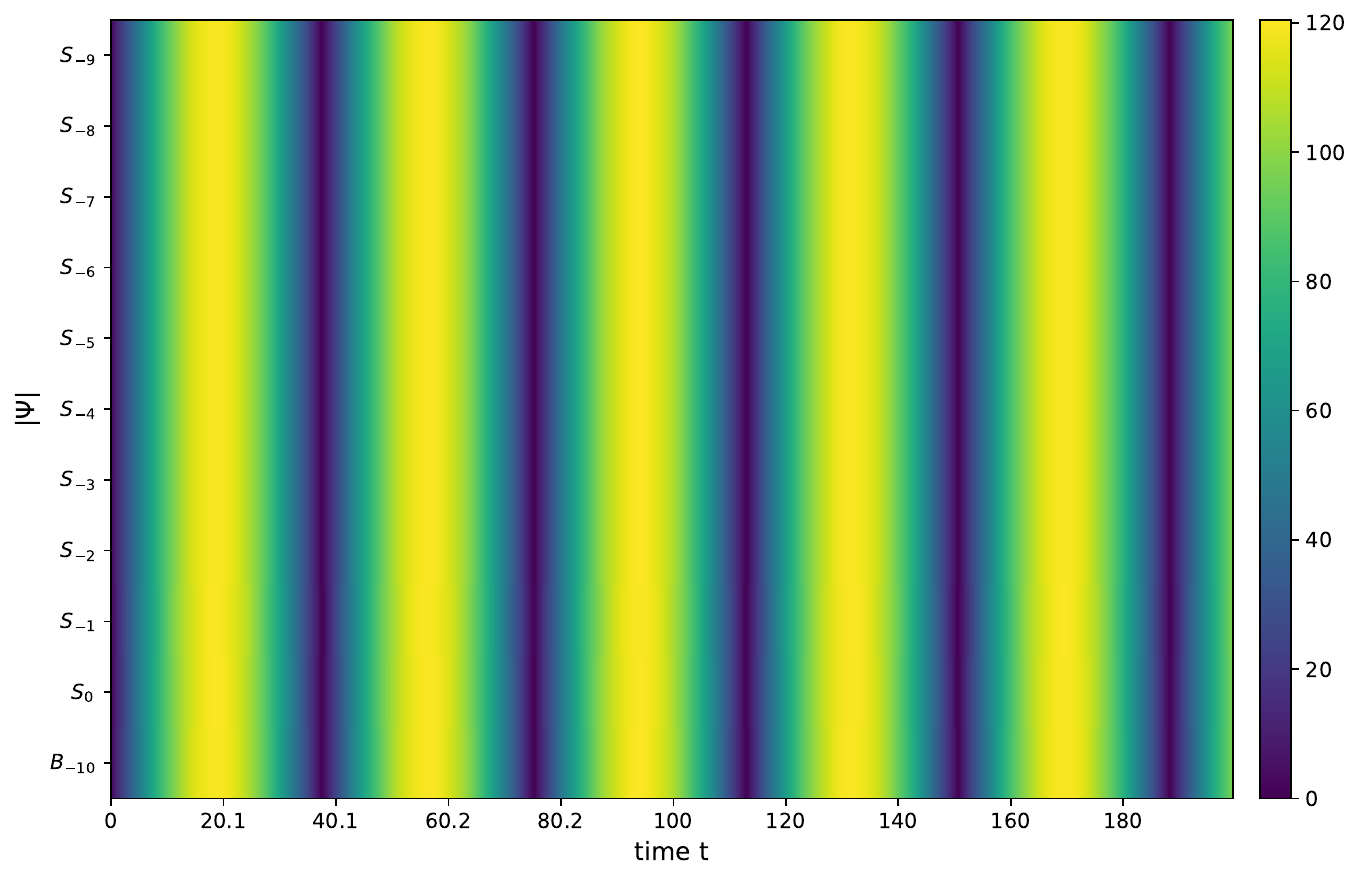}
        \caption{Amplitude for $v=56$, $W=0$, $Z=10$}
    \end{subfigure}
      \vspace{1em} 

\begin{subfigure}{0.48\textwidth}
        \includegraphics[width=\linewidth]{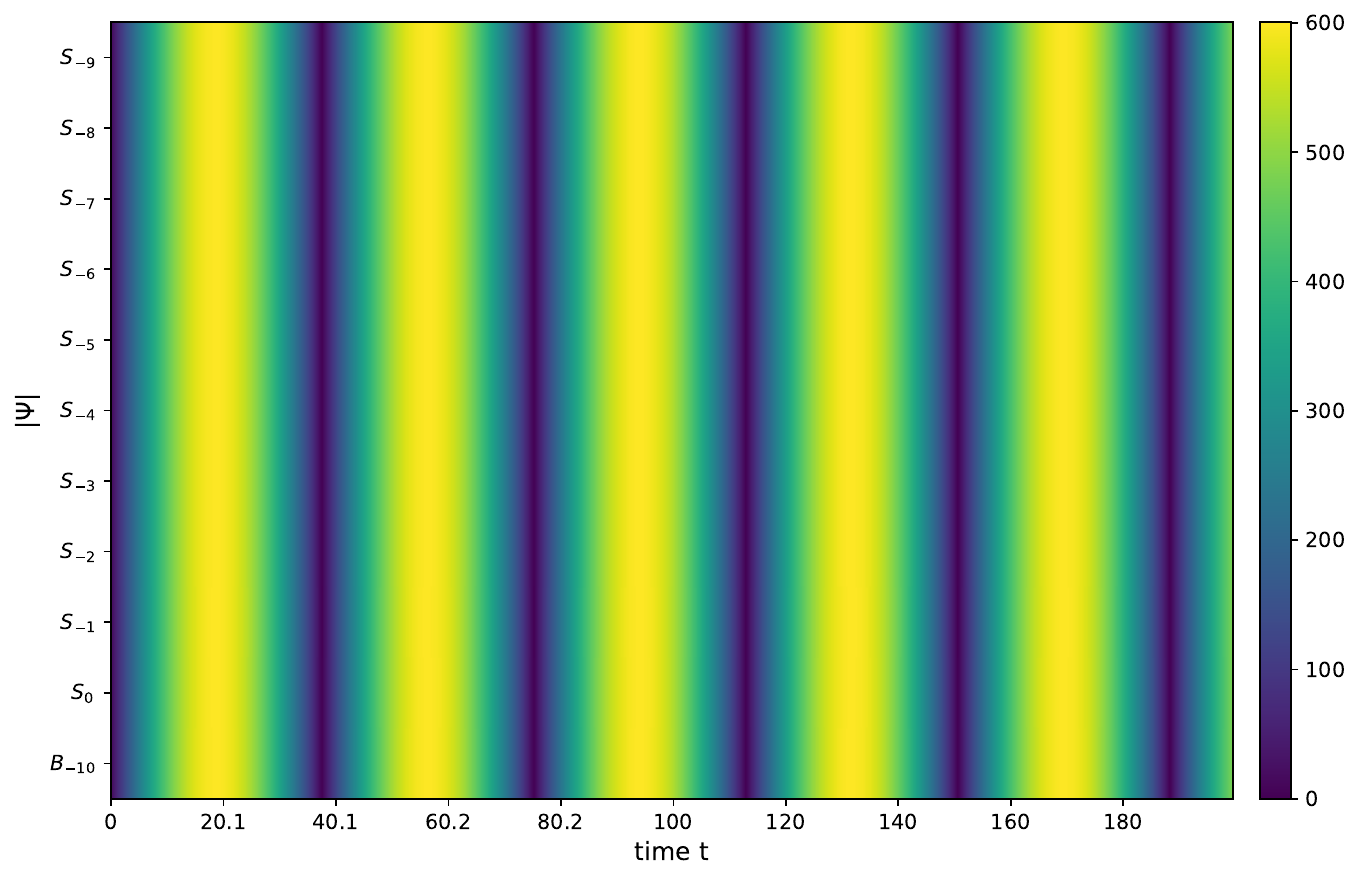}
        \caption{Amplitude for $v=56$, $W=0$, $Z=50$}
    \end{subfigure}
    \hfill
    \begin{subfigure}{0.48\textwidth}
        \includegraphics[width=\linewidth]{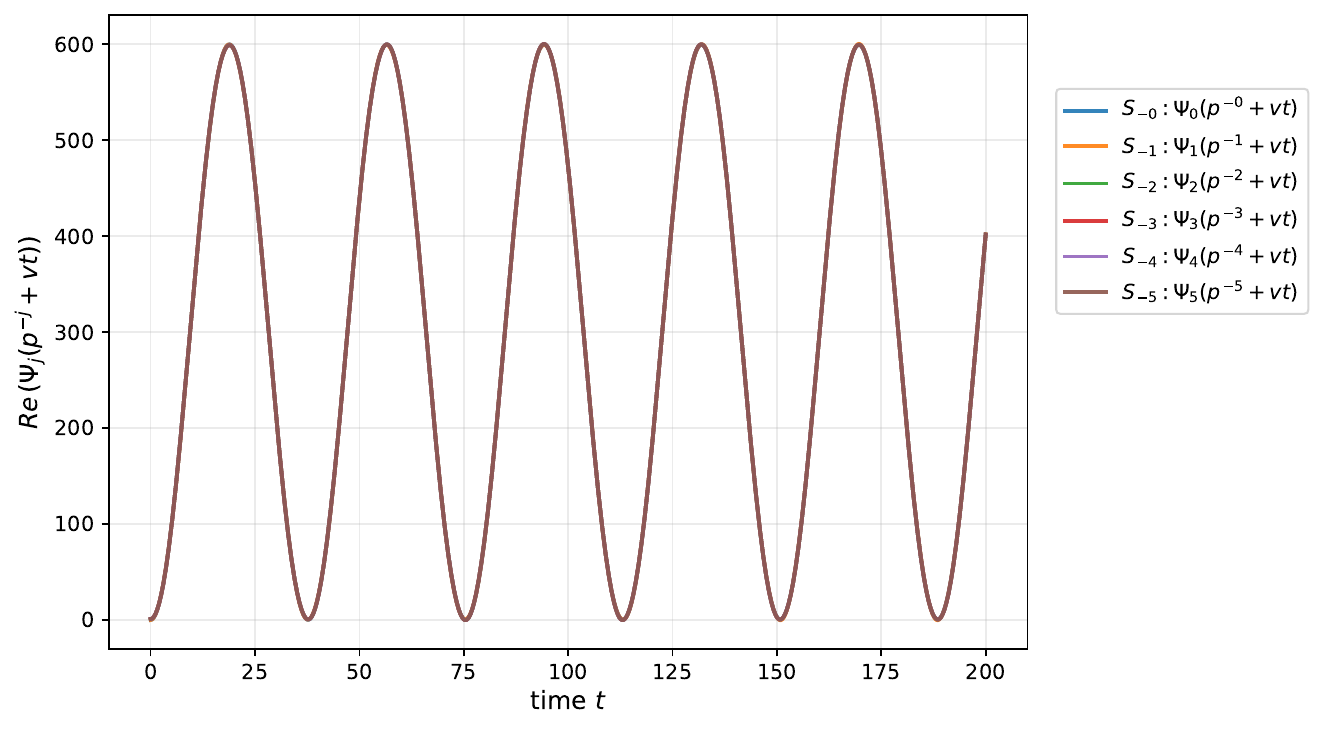}
        \caption{Profile}
    \end{subfigure}
    
    \vspace{1em} 
      
      \caption{In these simulations, $W=0$ and $v=56$ are fixed, and $Z$ is variable ($Z\in
\left\{ 0,1,5,10,50\right\} $).  For $Z\geq 5$, $\left\vert \Psi \left(
x,t\right) \right\vert $ has an oscillatory behavior independent of the
initial condition. The amplitude $\left\vert \Psi \left( x,t\right)
\right\vert $ \ grows with $Z$.  }
    \label{Figure2}
\end{figure}

The Figure \ref{Figure2} shows the simulations for $W$ fixed, $v$ fixed, and 
$Z$ variable. In these simulations, $v=56$, and $W=0$. For $Z\geq 5$, $%
\left\vert \Psi \left( x,t\right) \right\vert $ has an oscillatory behavior
independent of the initial condition. The amplitude $\left\vert \Psi \left(
x,t\right) \right\vert $ \ grows with $Z$. The Figure \ref{Figure3} shows
simulations for $W$ variable ($W\in \left\{ 0,1,10\right\} $) , $v=56$
fixed, and $Z=1$. For $W\geq 1$, $\left\vert \Psi \left( x,t\right)
\right\vert $ has an oscillatory behavior independent of the initial
condition. Furthermore, the amplitude $\left\vert \Psi \left( x,t\right)
\right\vert $ \ grows with $W$.  The Figure \ref{Figure4} shows simulations for $W$%
, $Z$ variable, and $v=56$ fixed. More precisely, 
\[
Z= 5\left\vert x\right\vert _{p}^{1.5}\text{, }W=8\left\vert
x\right\vert _{p}^{1.5},x\in \mathbb{Z}_{p}.
\]%
In this case, the values of \ $Z$ and $W$ are comparable. The amplitude $%
\left\vert \Psi \left( x,t\right) \right\vert $  has a noisy oscillatory
behavior, where the initial condition appears clearly.  In the Figure  \ref{Figure5}, each entry of the discretization of the matrix $W$ was selected as a random sample (with uniform distribution) from $[0,10]$; each entry of the discretization of the vector $Z$ was selected as a random sample (with
uniform distribution) from $[0,10]$. The velocity was \ fixed at $v=56$. The
amplitude $\left\vert \Psi \left( x,t\right) \right\vert $ exhibits noisy
oscillatory behavior depending on the initial condition.

\begin{figure}[htbp]
    \centering
    \begin{subfigure}{0.48\textwidth}
        \includegraphics[width=\linewidth]{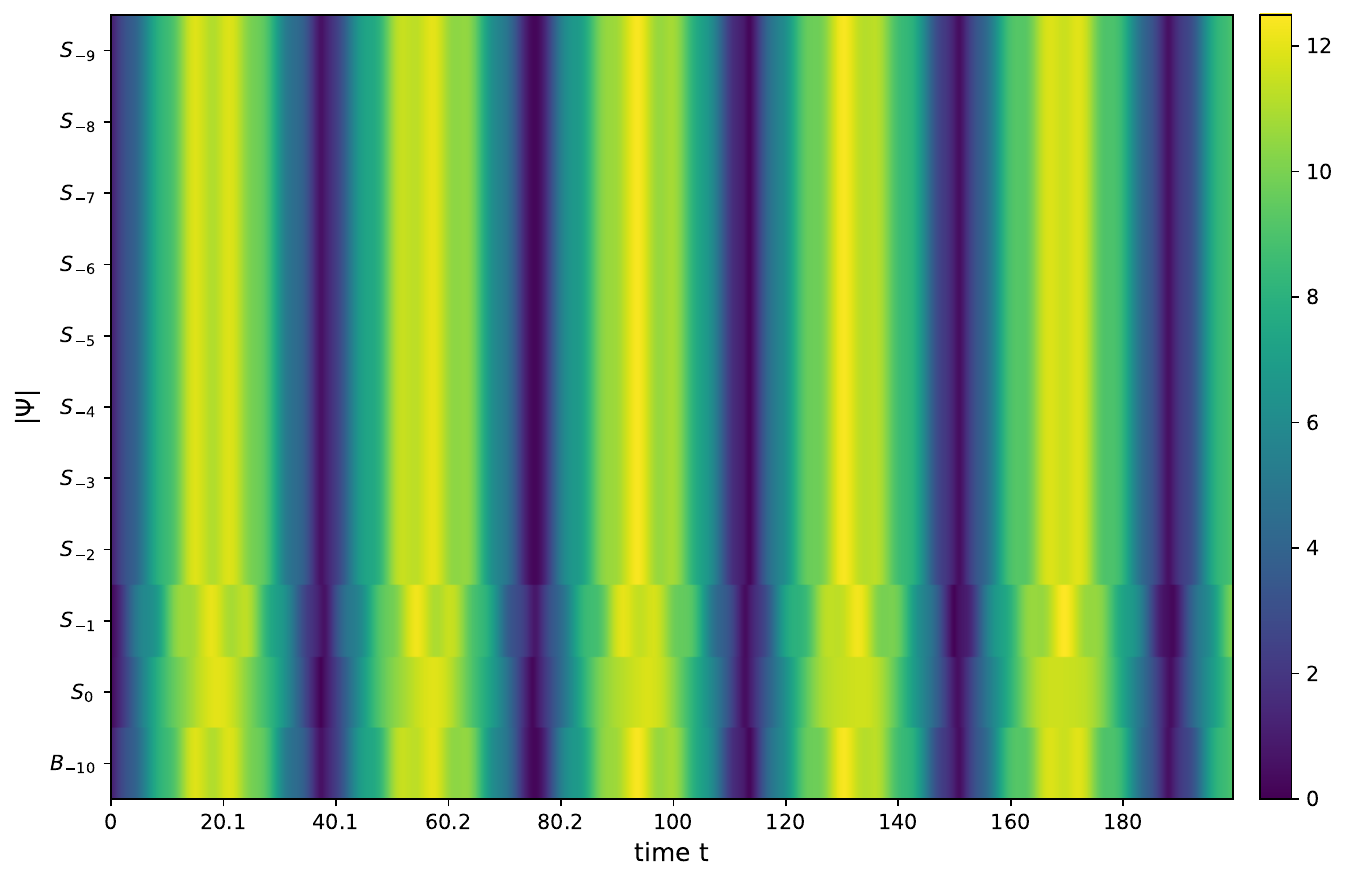}
        \caption{Amplitude for $v=56$, $Z=1$, $W=0$}
    \end{subfigure}
    \hfill
    \begin{subfigure}{0.48\textwidth}
        \includegraphics[width=\linewidth]{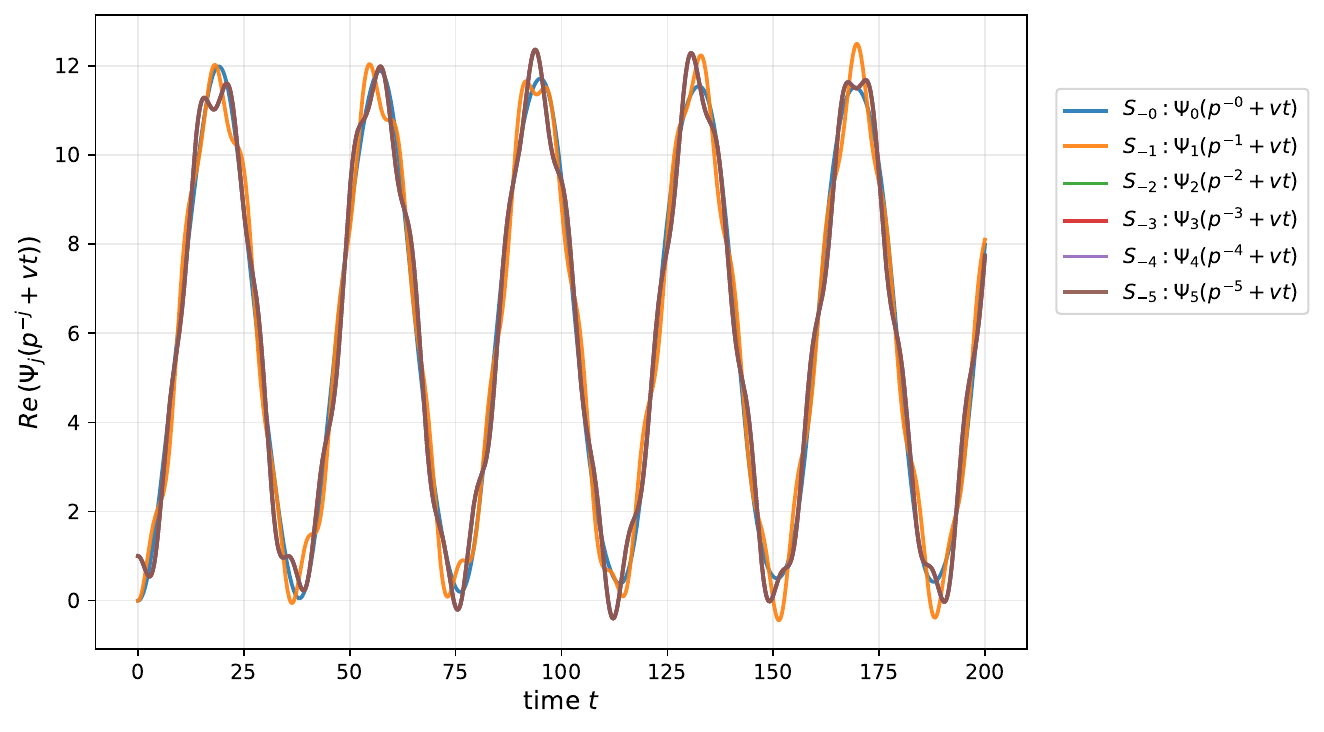}
        \caption{Profile}
    \end{subfigure}
    
    \vspace{1em} 

    \begin{subfigure}{0.48\textwidth}
        \includegraphics[width=\linewidth]{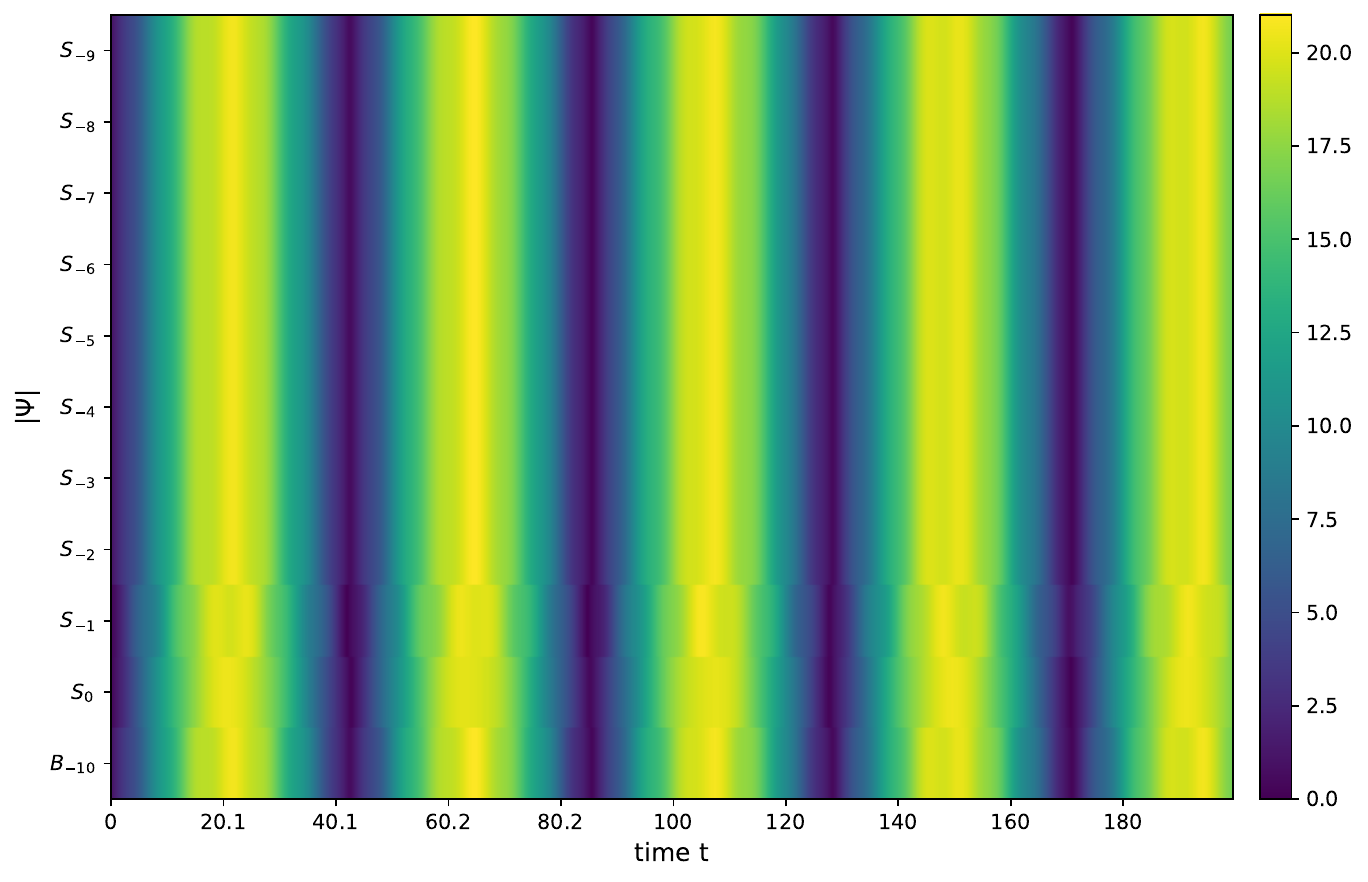}
        \caption{Amplitude for $v=56$, $Z=1$, $W=1$}
    \end{subfigure}
    \hfill
    \begin{subfigure}{0.48\textwidth}
        \includegraphics[width=\linewidth]{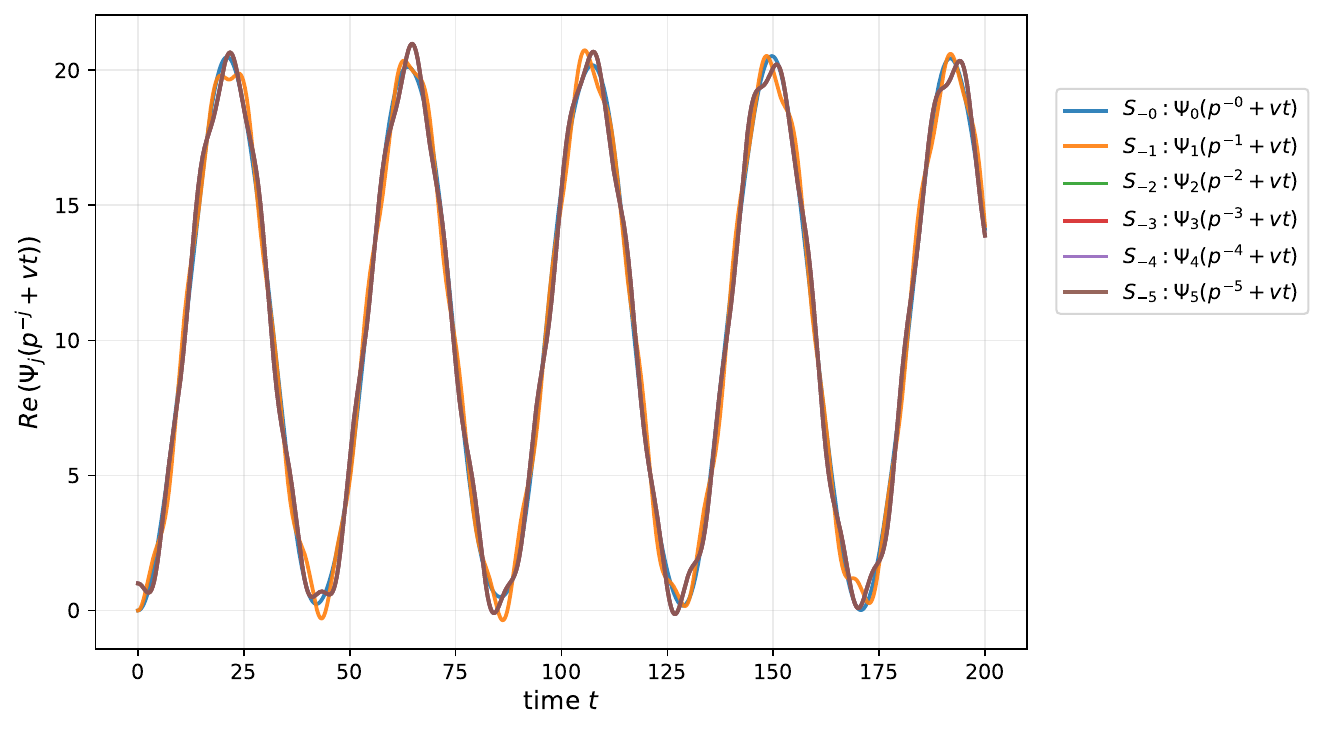}
        \caption{Profile}
    \end{subfigure}
    
    \vspace{1em} 

    \begin{subfigure}{0.48\textwidth}
        \includegraphics[width=\linewidth]{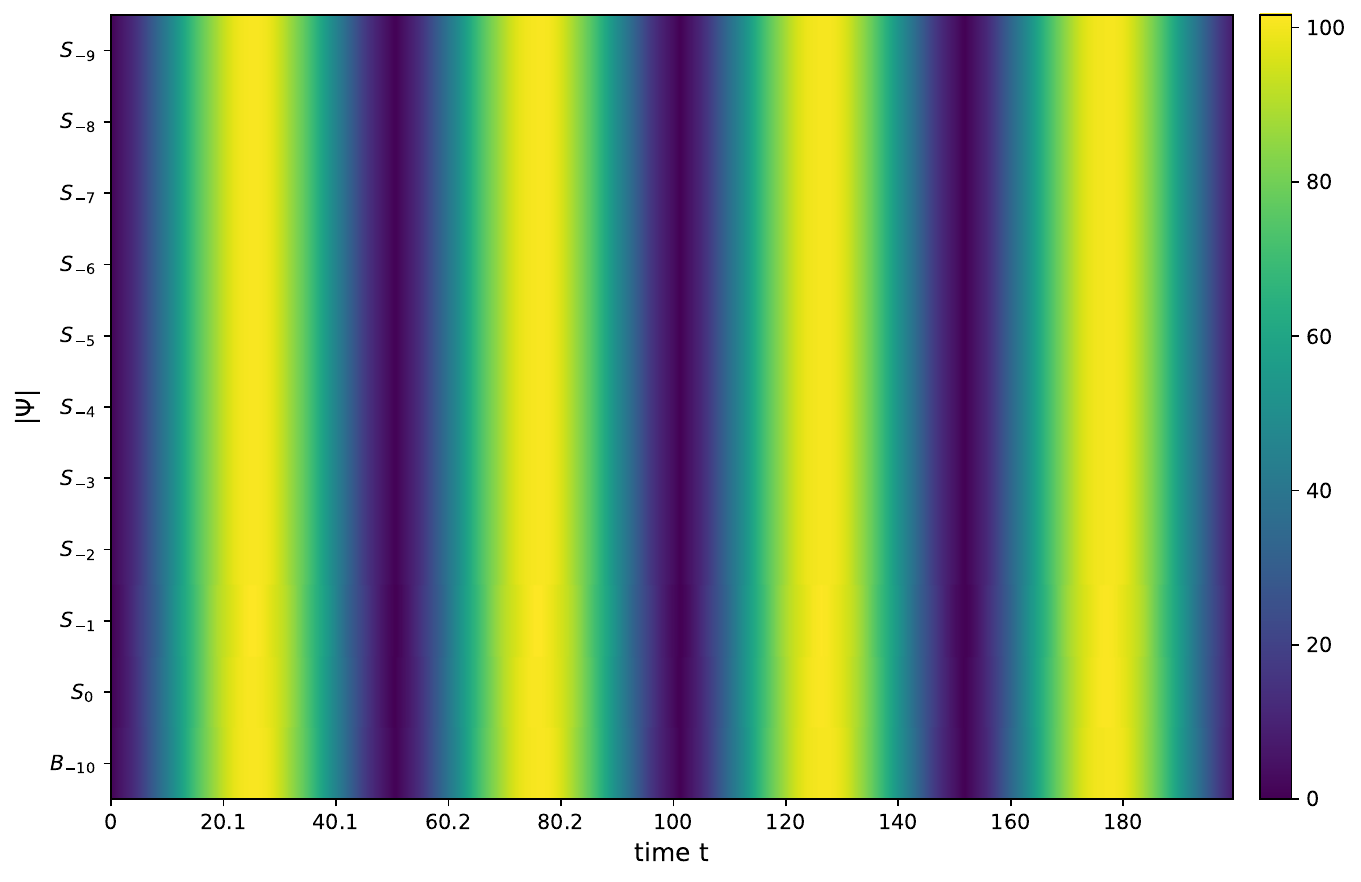}
        \caption{Amplitude for $v=56$, $Z=1$, $W=10$}
    \end{subfigure}
    \hfill
    \begin{subfigure}{0.48\textwidth}
        \includegraphics[width=\linewidth]{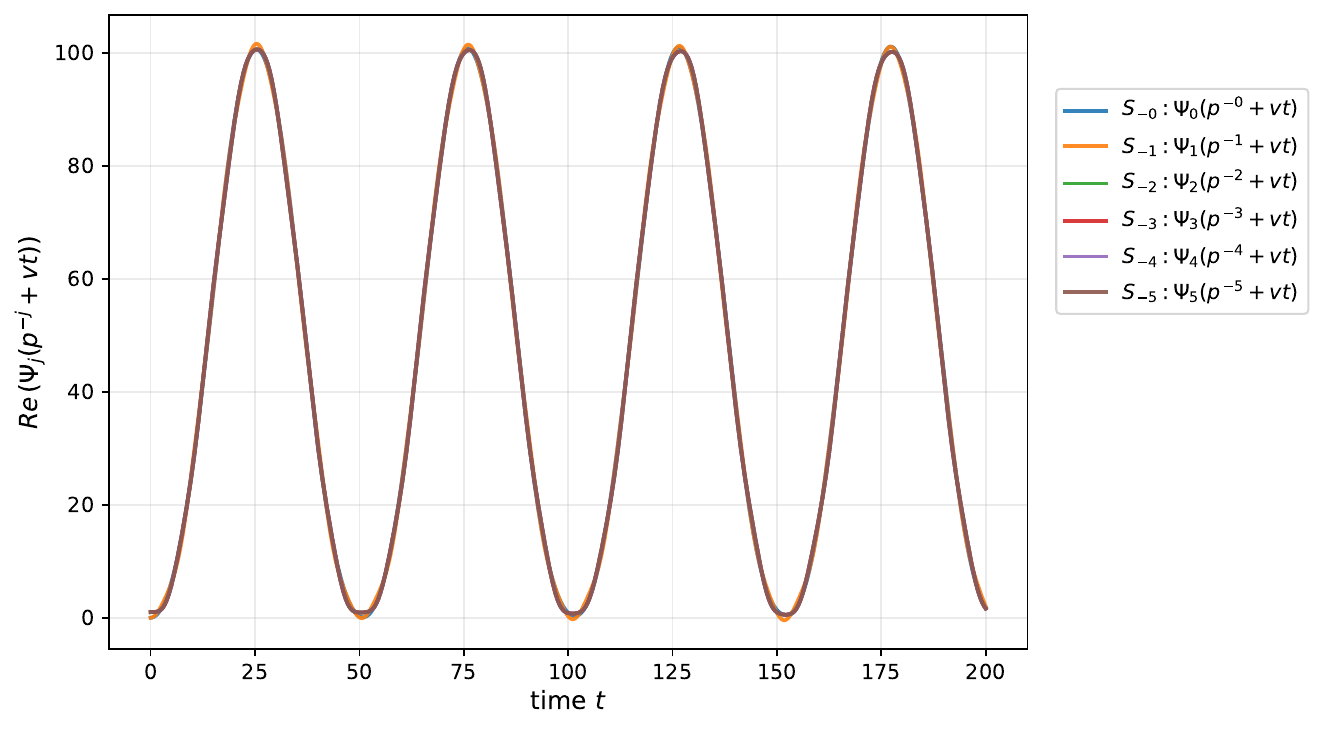}
        \caption{Profile}
    \end{subfigure}
    
    \vspace{1em} 

    \caption{In these simulations,  $W$ is variable ($W\in \left\{ 0,1,10\right\} $) ,
while $v=56$ and $Z=1$ are fixed. For $W\geq 1$, $\left\vert \Psi \left(
x,t\right) \right\vert $ has an oscillatory behavior independent of the
initial condition. Furthermore, the amplitude $\left\vert \Psi \left(
x,t\right) \right\vert $ \ grows with $W$. \ The profiles  show that the $%
\mathrm{Re}\left( \Psi \left( x,t\right) \right) $ are oscillations whose
amplitudes grow with $W$.
}
    \label{Figure3}
    \end{figure}

\begin{figure}[htbp]
    \centering
    \begin{subfigure}{0.48\textwidth}
        \includegraphics[width=\linewidth]{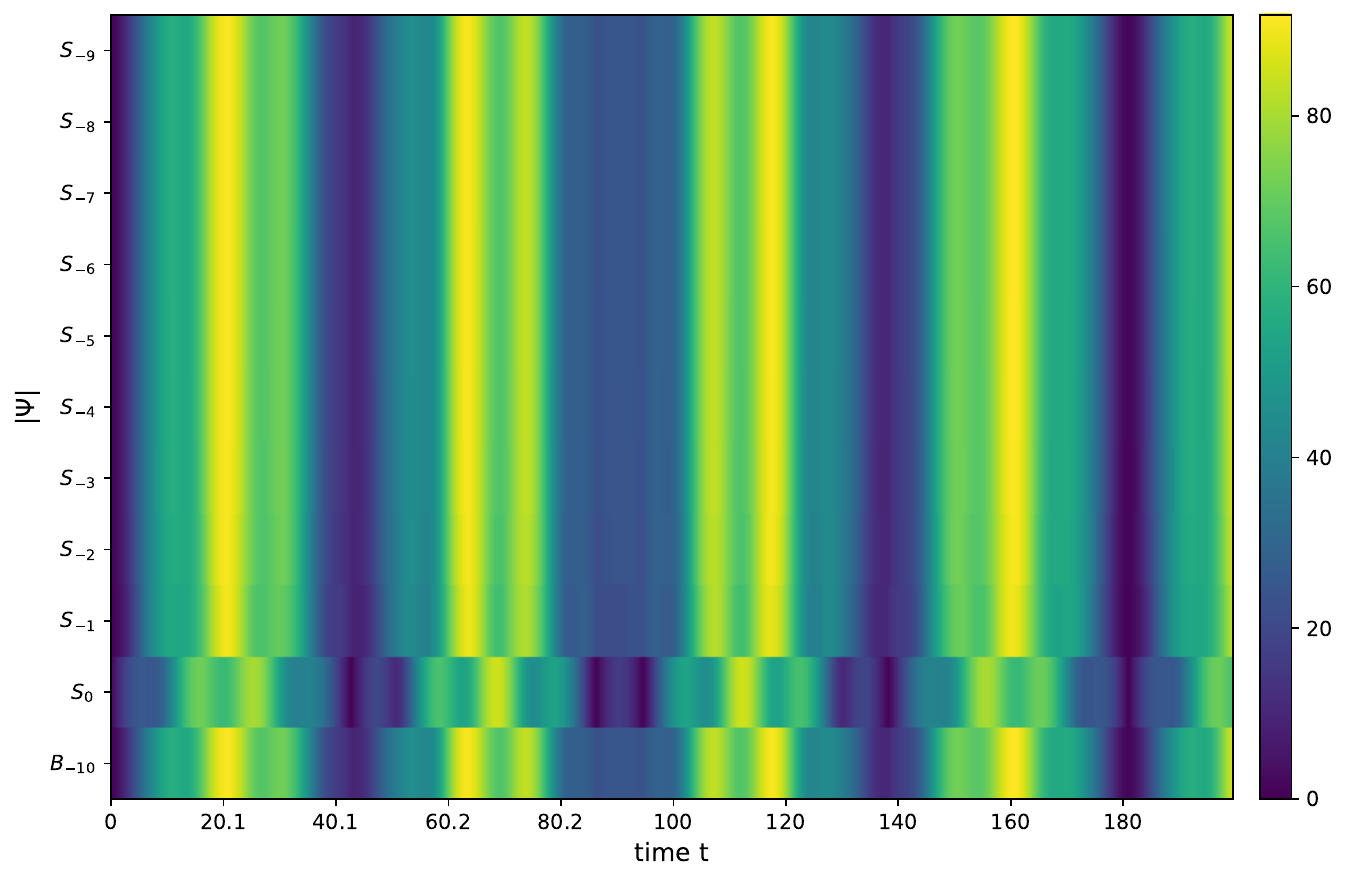}
        \caption{Amplitude }
    \end{subfigure}
  \hfill
\begin{subfigure}{0.51\textwidth}
        \includegraphics[width=\linewidth]{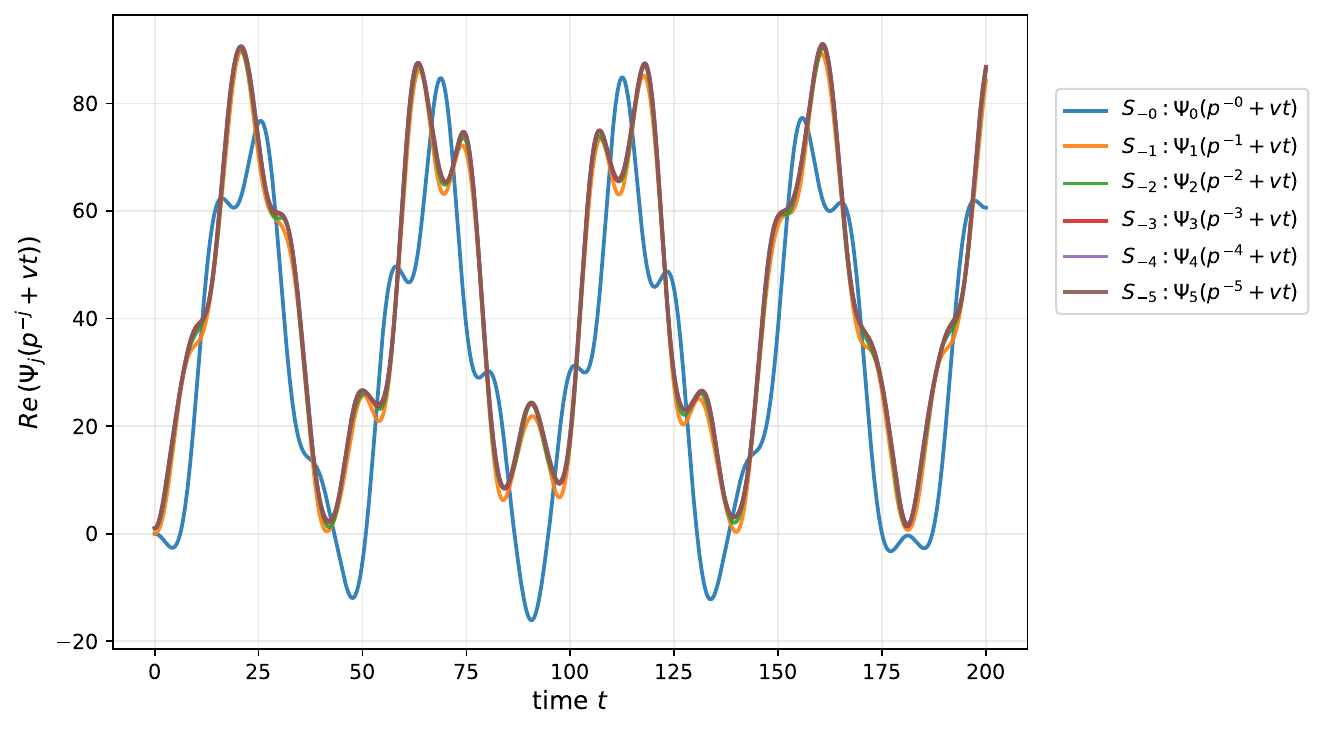}
        \vspace{-0.12cm}
        \caption{Profile}
    \end{subfigure}
    \caption{The parameters are fixed as $v=56$, $Z=\bigskip 5\left\vert x\right\vert
_{p}^{1.5}$, $W=8\left\vert x\right\vert _{p}^{1.5}$, $x\in \mathbb{Z}_{p}$.
In this case the values of $Z$ and $W$ are comparable, and the amplitude $%
\left\vert \Psi \left( x,t\right) \right\vert $ exhibits a noisy oscillatory
behavior, where the initial datum can be seen clearly. The noisy oscillatory
behavior is clearly illustrated in the amplitude profile.
}
\label{Figure4}
\end{figure}

\begin{figure}[htbp]
    \centering
    \begin{subfigure}{0.48\textwidth}
        \includegraphics[width=\linewidth]{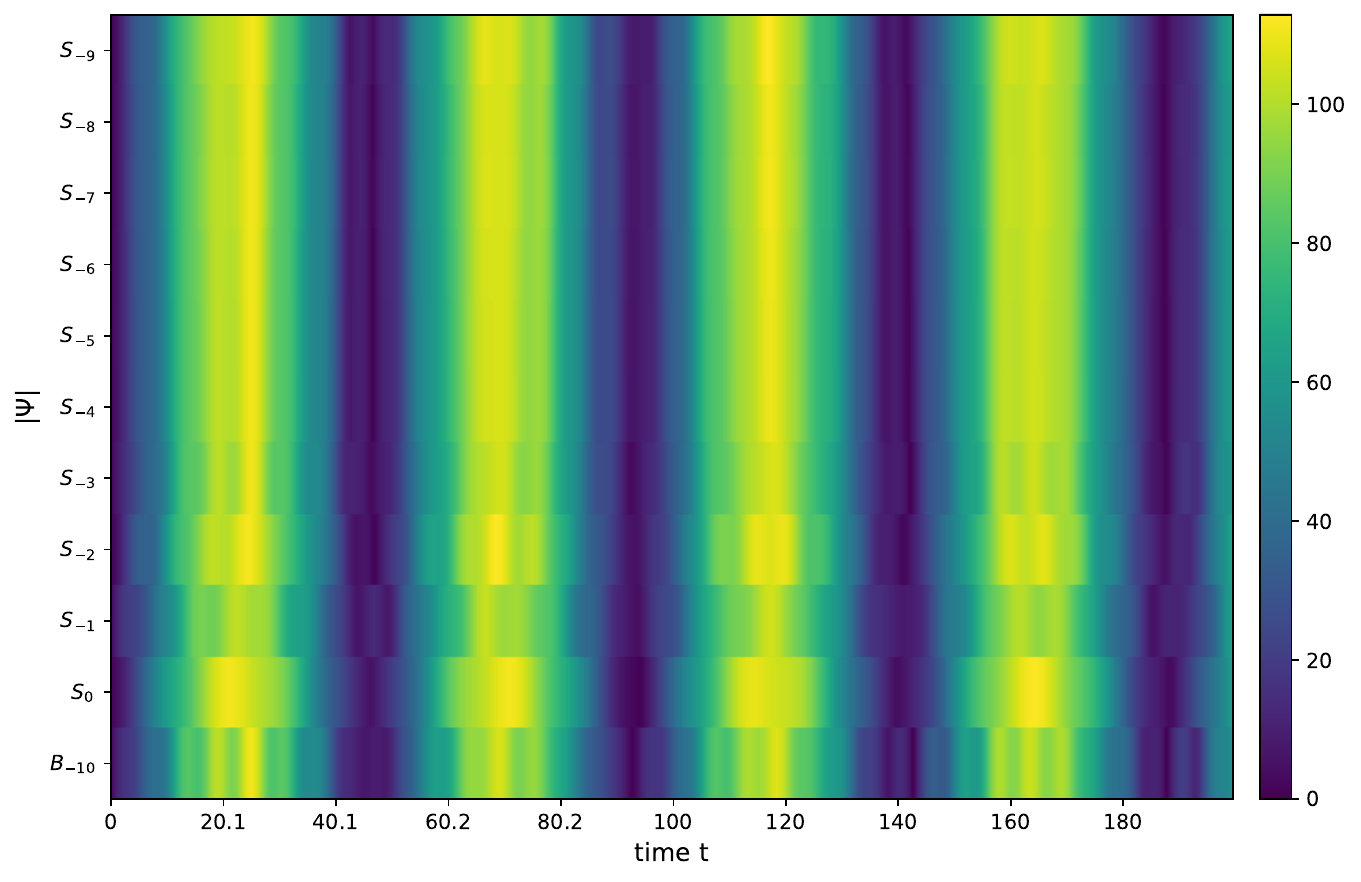}
        \caption{Amplitude $\left\vert \Psi \left( x,t\right) \right\vert $
}
    \end{subfigure}
  \hfill
\begin{subfigure}{0.51\textwidth}
        \includegraphics[width=\linewidth]{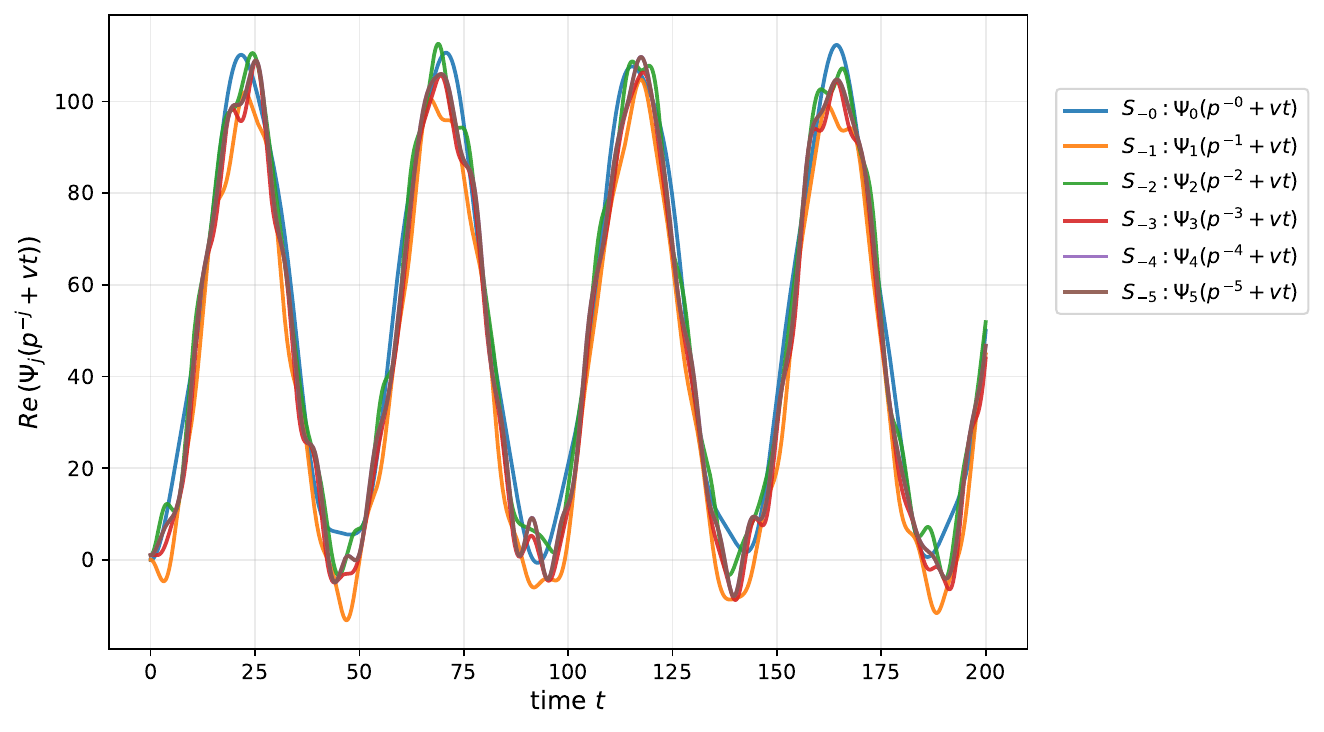}
        \vspace{-0.12cm}
        \caption{Profile}
    \end{subfigure}
    \caption{The velocity is fixed at $56$. The discretization of $W$ is a matrix $\left[
W_{ij}\right] $, $i,j\in \mathbb{Z}_{2}/2^{10}\mathbb{Z}_{2}$, where $W_{ij}$
is a sample of a uniformly distributed random variable in the interval $%
\left[ 0,10\right] $. The discretization of $Z$ is a vector $\left[ Z_{i}%
\right] $,  $i\in \mathbb{Z}_{2}/2^{10}\mathbb{Z}_{2}$, where $Z_{i}$ is a
sample of a uniformly distributed random variable in the interval $\left[
0,10\right] $. The amplitude $\left\vert \Psi \left( x,t\right) \right\vert $
exhibits a noisy oscillatory behavior, where the initial datum can be seen
clearly. The profile of  $\Psi \left( x,t\right) $, corresponding to $\mathrm{Re}(\Psi \left( x,t\right) )$ evaluated on some spheres, clearly shows noisy
oscillations.}
\label{Figure5}
\end{figure}

\end{document}